\documentclass[article,onecolumn, draftclsnofoot]{IEEEtran}           
\usepackage{hyperref}
\usepackage{bbm}
\usepackage{amsmath}
\usepackage{mathrsfs}
\usepackage{xcolor}
\usepackage{graphicx}
\usepackage{amsfonts}
\usepackage[english]{babel}
\usepackage{amsthm}
\usepackage{epstopdf}

\newtheorem{theorem}{Theorem}

\newtheorem{definition}{Definition}
\newtheorem{lemma}{Lemma}

\newtheorem{lemma*}{Lemma}
\newtheorem{assumption}{Assumption}
\newtheorem{proposition}{Proposition}

\newcommand{\mA}{\mathcal{A}}
\newcommand{\mX}{\mathcal{X}}

\newcommand{\mN}{[N]}
\newcommand{\cP}{\mathcal{P}}

\renewcommand{\hat}{\widehat}

\def\lpr{\left\{}
\def\rpr{\right\}}

\newcommand{\mP}{\mathbb{P}}
\newcommand{\mE}{\mathbb{E}}	

\newcommand{\eq}[1]{\begin{align}#1\end{align}}
\newcommand{\seq}[1]{\begin{subequations}#1\end{subequations}}

\newcommand{\lb}[1]{\left\{ \begin{array}{ll} #1 \end{array} \right.}

\newcommand{\E}{\mathbb{E}}
 \newcommand{\nn}{\nonumber}
\newcommand{\cX}{\mathcal{X}}
\newcommand{\cZ}{\mu^G}

\newcommand{\cA}{\mathcal{A}}

\newcommand{\tsigma}{\tilde{\sigma}}
\newcommand{\hmu}{\hat{\mu}}

\newcommand{\cH}{\mathcal{H}}
\newcommand{\tgamma}{\tilde{\gamma}}

\newcommand{\defeq}{\buildrel\triangle\over =}

\newcommand{\pushright}[1]{\ifmeasuring@ #1 \else\omit\hfill$\displaystyle#1$\fi\ignorespaces}
\newcommand{\pushleft}[1]{\ifmeasuring@ #1 \else\omit$\displaystyle#1$\hfill\fi\ignorespaces}

\usepackage{xspace}
\usepackage[acronym,nogroupskip,nonumberlist,nopostdot]{glossaries}
\glsdisablehyper

% WIRELESS AND DIGITAL COMMS
\newacronym{mdp}{MDP}{Markov decision process}
\newacronym{ne}{NE}{Nash equillibrium}
\newacronym{mfe}{MFE}{mean-field equillibrium}
\newacronym{mpe}{MPE}{Markov perfect equillibrium}
\newacronym{mfg}{MFG}{mean-field game}
\newacronym{rl}{RL}{reinforcement learning}
\newacronym{marl}{MARL}{multi-agent reinforcement learning}
\newacronym{iot}{IoT}{Internet of Things}
\newacronym{ssg}{SSG}{Stackelberg security game}
\newacronym{pse}{PSE}{perfect Stackelberg equilibrium}
\newacronym{mpse}{MPSE}{Markov \gls{pse}}
\newacronym{pomdp}{POMDP}{partially observed \gls{mdp}}
\newacronym{mc}{MC}{Monte Carlo}
\newacronym{pbe}{PBE}{perfect Bayesian equilibrium}
\newacronym{gmfg}{GMFG}{graphon mean-field game}
\newacronym{gmfe}{GMFE}{graphon mean-field equillibrium}
\newacronym{spe}{SPE}{sub-game perfect equillibrium}
\newacronym{mkv}{MKV}{McKean-Vlasov}

\newcommand{\mdp}{\gls{mdp}\xspace}

\newcommand{\mfe}{\gls{mfe}\xspace}
\newcommand{\mpe}{\gls{mpe}\xspace}

\newcommand{\gmfg}{\gls{gmfg}\xspace}
\newcommand{\gmfe}{\gls{gmfe}\xspace}

\newcommand{\mkv}{\gls{mkv}\xspace}

%%%%%%%%%%%%%%%%
\begin{document}
\title{Master equation of discrete time graphon mean field games and teams}
\author{Deepanshu Vasal, Rajesh Mishra and Sriram Vishwanath\thanks{Deepanshu Vasal is with Department of Electrical and Computer Engineering at Northwestern University.} 
\thanks{Rajesh Mishra and Sriram Vishwanath are with Department of Electrical and Computer Engineering at University of Texas, Austin.}
\thanks{Part of the paper was presented at~\cite{VaMiVi21}.}}% <-this % stops a space

\maketitle

\begin{abstract}
		In this paper, we present a sequential decomposition algorithm equivalent of Master equation to compute \gmfe of \glspl{gmfg} and graphon optimal Markovian policies (GOMPs) of graphon mean field teams (GMFTs). We consider a large population of players sequentially making strategic decisions where the actions of each player affect their neighbors which is captured in a graph, generated by a known graphon. Each player observes a private state and also a common information as a graphon mean-field population state which represents the empirical networked distribution of other players' types. We consider non-stationary population state dynamics and present a novel backward recursive algorithm to compute both \gmfe and GOMP that depend on both, a player's private type, and the current (dynamic) population state determined through the graphon. Each step in computing GMFE consists of solving a fixed-point equation, while computing GOMP involves solving for an optimization problem. We provide conditions on model parameters for which there exists such a \gmfe. Using this algorithm, we obtain the \gmfe and GOMP for a specific security setup in cyber physical systems for different graphons that capture the interactions between the nodes in the system. 
\end{abstract}

\begin{IEEEkeywords}
	Graphon mean-field teams and games, Sequential decomposition,  Signaling, Optimal Markov strategies
\end{IEEEkeywords}

\section{Introduction}
	Interaction of interconnected agents has been an important topic of study for many decades and its relevance has been increasing rapidly with the progress of internet penetration and smartphone devices in our society. The recent decade has seen tremendous technological advancement in the field of networking applications that has led to an unprecedented scale of interaction among people and devices such as in ride sharing platforms, social media apps, cyber-physical systems, autonomous vehicles and drones,  large scale renewable energy, electric vehicles, cryptocurrencies and smart grid systems. For instance, the influence of social networks in the decision making of majority of individuals is a known phenomenon. Most decisions by individuals from which products to buy to whom to vote for are influenced by friends and acquaintances. The emerging empirical evidence on these issues motivates the theoretical study of network effects with strategic and non strategic agents. The analysis, design and control of such systems that involve such interactions embedded in a networked environment could lead to more intelligent and efficient applications, and can enhance our understanding of the mechanics of such interactions. 

Many of the above mentioned applications of interest have following key features: (a) large number of strategic or non strategic players (b) dynamically evolving incomplete information, and (c) an underlying network. When the decision makers are non strategic, one can pose such problems as decentralized stochastic control problems on a network and in general such problems are extremely hard (see~\cite{Wi68,NaMaTe13} and references therein). When it comes to problems with strategic interactions, game theory is a natural choice to model such interactions where the payoffs obtained by individuals depend on the action of her neighbors. A shortcoming of the standard approach to solve dynamic network games with incomplete information is the interdependence of strategies of the players across time. Moreover, as the number of players become large as is the case in many practical scenarios considered here, computing Nash equilibrium becomes intractable. 

\subsection{Relevant Literature}
For the decentralized team problems, Witsenhausen provided a `simple' two stage LQG system~\cite{Wi68} where he showed that linear policies are not optimal and to this day we don't know the optimal policies for that system showing how such simple looking decentralized control system could be extremely hard. Decentralized control systems have been studied extensively in the literature where not too long ago Nayyar et al in~\cite{NaMaTe13} (see references there in) presented a \emph{common agent approach} where showed that a class of decentralized control problems with common information can be posed as a single agent partially observed Markov decision problems and thus in principle can be solved using dynamic programming. Arabneydi and Mahajan posed such a problem with large number of players as Mean field team problems in~\cite{AbMa14} and provided a dynamic programming approach to find optimal Markovian policies for such problems.

There is a huge literature on studying dynamic decision problems when the users are strategic. Maskin and Tirole in~\cite{MaTi01} introduced the concept of \mpe for dynamic games governed by an underlying \mdp. The strategies thus computed depend on the present state and not on the past trajectory of the game. In general, there exists a backward recursive methodology to compute \mpe of the game. Some prominent examples of the application of \mpe include~\cite{ErPa95, BeVa96, AcRo01}. Ericson and Pakes in~\cite{ErPa95} model industry dynamics for firms' entry, exit and investment participation, through a dynamic game with symmetric information, compute its \mpe, and prove ergodicity of the equilibrium process. Bergemann and V{\" a}lim{\"a}ki in~\cite{BeVa96} study a learning process in a dynamic oligopoly with strategic sellers and a single buyer, allowing for price competition among sellers. They study \mpe of the game and its convergence behavior. Acemo\u{g}lu and Robinson in~\cite{AcRo01} develop a theory of political transitions in a country by modeling it as a repeated game between the elites and the poor, and study its \mpe. When players have private types then an appropriate solution concept is perfect Bayesian equilibrium (PBE) and sequential equilibirum (SE). Recently authors in~\cite{VaSiAn16arxiv,VaAn16, Ta17,HeAn20, OuTaTe17} presented backward recursive sequential decomposition methodologies to compute PBE for different classes of dynamic games of incomplete information.

 In large population games, computing \mpe, PBE and SE with the methods specified above becomes intractable. Mean field games (MFG) were introduced in Huang, Malham\'e,and Caines~\cite{HuMaCa06}, and Lasry and Lions~\cite{LaLi07} to model the strategic interactions with large number of players. In such games, the individual agents have minimal impact of the overall outcome of the game and so the agents track a mean distribution of states of other agents rather than their actual states. MFGs is an excellent and a tractable model to study large population dynamic games of incomplete information, and has been shown to be a good approximation of Nash equilibrium (or MPE) of the original game as the number of players grow large (for instance see ~\cite{Caetal15,La16,Fi17,La18,DeLaRa19} and references therein).

Parise and Ozdaglar introduced the notion of graphon games~\cite{PaOz19} to model large population \emph{static} network games, where graphon is generative model of a large random graph inroduced by Lov\"asz in~\cite{Lo12}. Caines and Huang in~\cite{CaHu18} combined the ideas of \mfe and graphon games to define Graphon Mean field games (\glspl{gmfg}) where there are a large number of strategic agents with dynamic incomplete information who interact on an underlying fixed network generated by a known graphon. 
\glspl{gmfg} combine the idea of network games defined through graphons and the mean field framework of describing multi agent homogeneous games and predicting equilibrium in a tractable manner. Large network of nodes interacting with one another can be represented as graphons and mean field games deal with the study of such large interaction among devices and people as agents to analyze such systems to design and understand the behavior of such large scale interactions and their impact on our society. The progress in research in the mean field domain have been restricted to cases where the agents interacted in a perfect homogeneous environment and the interactions between the agents were assumed to be uniform irrespective of the location of the agent in the network. However, in many real world scenarios the population interaction is not uniform and there is a measure of how the agents interacted with each other or in other words, the payoff and the transition to the next state is conditional on the relative position of the agent in the network. Then the mean field distribution would be affected by it and so will the optimum policies and the Nash equilibrium thus generated. The theoretical basis for such a case has been provided in~\cite{CaHu18} which generalizes the idea of mean field games across the population with different levels of interactions through \gmfg. 
 
In this paper, we consider both discounted finite horizon and infinite-horizon dynamic graphon mean-field teams and games where there is a large population of homogeneous players each having a private type. Each player sequentially makes decisions and is affected by other players in its neighborhood through a graphon mean-field population state. Each player has a private type that evolves through a controlled Markov process as a function of the graphon, which only she observes and all players observe a common population state which is the distribution of other players' types. In such games, the graphon mean-field state evolves through McKean-Vlasov forward equation given a policy of the players and the graphon function. The equilibrium policy satisfies the Bellman backward equation, given the graphon mean-field states. Thus to compute equilibrium, one needs to solve the coupled backward and forward fixed-point equation in the graphon mean-field and the equilibrium policy. We propose a sequential decomposition algorithm to compute \glspl{gmfe} and GOMPs by decomposing the problem across time. This algorithm is equivalent to the Master equation of continuous time mean field game~\cite{CaDeLaLi15} that allows one to compute all mean field equilibria (MFE) of the game sequentially.\footnote{Since the publishing an initial version of this paper in~\cite{VaMiVi20}, authors in~\cite{TcCaHu20} have computed a Master's equation for Linear Quadratic Gaussian (LQG) GMFG.}

In order to demonstrate the utility of our algorithm to compute the \gmfe  and GOMP of a graphon mean field game and a team for varying graphons, we consider a cyber-security example of malware spread problem. A cluster of nodes in a network of physical servers get infected by an independent random process. For each node, there is a higher risk of getting infected due to negative externality imposed by other infected players. A graphon function is defined that quantifies the effect of the effect of the state of other nodes in the network on the concerned node. At each time t, a node privately observes its own state and publicly observes the population of infected nodes, based on which it has to make a decision to repair or not. Upon taking an action, the transition of to the  next state is governed by both its individual action and the actions affected by the neighboring agents given by a graphon function. Using our algorithm, we find equilibrium strategies of the players which are observed to be non-decreasing in the healthy population state. Similarly we find optimal Markovian policies for the team problem.

The paper is structured as follows. In Section~\ref{sec:Model}, we present a model of the graphon mean field game and team, followed by some preliminary result from our past research regarding \mpe in strategic dynamic games. In section ~\ref{sec:methodology}, we present our main results where we present algorithm to compute \mpe for both finite and infinite horizon game, and also present existence results. In Section~IV we talk about the existence of GMFE. In Section~V, we
consider graphon team problem and provide a dynamic program to find optimal Markovian policies.
In Section~\ref{sec:example}, we show the simulation results for the cyber-security example assuming different graphons and conclude in Section~\ref{sec:Concl}.

\subsection{Notation}
	We use uppercase letters for random variables and lowercase for their realizations. For any variable, subscripts represent time indices and superscripts represent player identities. We use notation $ -\alpha$ to represent all players other than player $\alpha$ i.e. $ -\alpha = \{1,2, \ldots i-1, i+1, \ldots, N \}$. We use notation $a_{t:t'}$ to represent the vector $(a_t, a_{t+1}, \ldots a_{t'})$ when $t'\geq t$ or an empty vector if $t'< t$. We use $a_t^{-\alpha}$ to mean $(a^1_t, a^2_{t}, \ldots, a_t^{i-1}, a_t^{i+1} \ldots, a^N_{t})$. We use the notation $\sum_x$ to represent both $\sum_x$ and $\int_x$, and the correct usage is determined depending on the space of $x$. We remove superscripts or subscripts if we want to represent the 
	 vector, for example $a_t$  represents $(a_t^1, \ldots, a_t^N) $. We denote the indicator function of any set $A$ by $\mathbbm{1}\{A\}$. For any finite set $\mathcal{S}$, $\mathcal{P}(\mathcal{S})$ represents space of probability measures on $\mathcal{S}$ and $|\mathcal{S}|$ represents its cardinality. We denote by $P^{\sigma}$ (or $E^{\sigma}$) the probability measure generated by (or expectation with respect to) strategy profile $\sigma$. We denote the set of real numbers by $\mathbb{R}$. For a probabilistic strategy profile of players $(\sigma_t^{\alpha})_{i\in [N]}$ where probability of action $a_t^{\alpha}$ conditioned on $\mu^G_{1:t},x_{1:t}^{\alpha}$ is given by $\sigma_t^{\alpha}(a_t^{\alpha}|\mu^G_{1:t},x_{1:t}^{\alpha})$, we use the short hand notation $\sigma_t^{-\alpha}(a_t^{-\alpha}|\mu^G_{1:t},x_{1:t}^{-\alpha})$ to represent $\prod_{j\neq i} \sigma_t^j(a_t^j|\mu^G_{1:t},x_{1:t}^j)$. All equalities and inequalities involving random variables are to be interpreted in \emph{a.s.} sense.

\section{Model and Background}
		\label{sec:Model}
	\subsection{Graphon Mean Field Games and Teams}
		Let us consider a discrete-time large population sequential game with $N$ homogeneous players with $N\to\infty$. The interactions between these $N$ players are captured in a asymptotically infinite network graph represented as a \emph{graphon}. Graphons are bounded symmetric Lebesgue measurable functions $W:\left[0,1\right]^2\to\left[0,1\right]$ which can be represented as weighted graphs on the vertex set $\left[0,1\right]$ such that $\mathbf{G}=\left\{g\left(\alpha,\beta\right):0\leq\alpha,\beta\leq 1\right\}$~\cite{CaHu18}. It is similar to an adjacency matrix defined over a $2$-dimensional plane where each entry in the matrix is the measure of coupling between the agents concerned. 
				
		In each period $t\in[T]$, where $[T]$ represents the time horizon, a player $\alpha\in\left[0,1\right]$ observes a private type $x_t^{\alpha}\in\cX$ and a common observation $\mu_t^G$, then takes an action $a_t^{\alpha}\in\cA $ and receives a reward $R(x_t^{\alpha},a_t^{\alpha},\mu_t^G)$. The common observation is an ensemble of the mean field distributions with respect to all agents $\alpha\in[0,1]$ given as $\mu^G_t = \{\mu_t^{\alpha}\}_{{\alpha}}$ where 	
		\begin{align}
			\mu_t^\alpha\left(x\right)=\mathbbm{P}\left\{x_t^\alpha=x\right\}
		\end{align}	
		with $\sum_{i=1}^{N_x} \mu^{\alpha}_t(i) = 1$. Player $\alpha$'s type evolves as a controlled Markov process,	
		\begin{align}
			\label{eqn:markov}
			x_{t+1}^{\alpha} = \tilde{f}[x_t^{\alpha} , a_t^{\alpha},  \mu_t^G; g^{\alpha} ] + w_t^{\alpha}.
		\end{align}

		The random variables $(w_t^{\alpha})_{{\alpha},t}$ are assumed to be mutually independent across players and across time. We also write the above update of $x_t^{\alpha}$ through a kernel, $x_{t+1}^{\alpha}\sim Q^{\alpha}(\cdot|x_t^{\alpha}, a_t^{\alpha}, \mu^G_t;g^{\alpha})$ which depends on the graphon function $g^\alpha=\left\{g(\alpha,\beta):0\leq\beta\leq1\right\}$. 
		
		The dynamics of the \mdp are governed both by the local information as well as the global dynamics involving the effect of the policy action of other players in the system. The idea of graphon is to capture the effect of the actions of all the other players $\beta\in[0,1]$ on player $\alpha$. In prior mean field research, it was assumed that there is a perfect interaction between the players and also that these interactions were uniform. In~\cite{CaHu18}, they provide a set of differential equations that govern such interactions in the mean field setting. The functions below show how the graphon is used in determining the effect of players on one another. The function $\tilde{f}$ in~\eqref{eqn:markov} is given as		
		\begin{align}
			\label{eq:func1}
			\tilde{f}[x_t^{\alpha} , a_t^{\alpha},  \mu_t^G; g^{\alpha} ] = f_0\left(x_t^{\alpha} , a_t^{\alpha}\right) + f\left[x_t^{\alpha} , a_t^{\alpha},  \mu_t^G; g^{\alpha} \right]
		\end{align}
		where
		\begin{align}
			\label{eq:func2}
			f\left[x_t^{\alpha} , a_t^{\alpha},  \mu_t^G; g^{\alpha} \right] =	\int_{\beta\in[0,1]}\sum_{x^\beta\in\cX}&f\left(x^\alpha_t,a^\alpha_t,x^\beta\right)g\left(\alpha,\beta\right)\mu_t^\beta(x^\beta)d(x^\beta)d\beta 		
		\end{align}
		and $f_0$ represent the local effect of the agent when it takes any action and is independent of the actions taken by other agents. In the case, when the agents do not interact at all i.e. $g(\alpha,\beta)=0$, the markov process reduces only to the function $f_0$ ignoring the degenerate case when $\alpha=\beta$. 
		
		At instant $t$, the player $\alpha$ observes the trajectory $(\mu^G_{1:t},x_{1:t}^{\alpha})$ and takes an action $a_t^{\alpha}$ according to a behavioral strategy $\sigma^{\alpha} = (\sigma_t^{\alpha})^t$, where $\sigma_t^{\alpha}:(\mu^G)^{t}\times\mathcal{X}^t \to \mathcal{P}(\mathcal{A})$. We denote the space of such strategies as $\mathcal{K}^{\sigma}$. This implies $A_t^{\alpha}\sim \sigma_t^{\alpha}(\cdot|\mu^G_{1:t},x_{1:t}^{\alpha})$. We denote $\mathcal{Z}^t$ to be the space of population states $\mu^G_{1:t}$ till time $t$. We denote $\mathcal{H}_t^{\alpha}=\mathcal{Z}^t \times \mathcal{X}^t$ to be set of observed histories $(\mu^G_{1:t},x_{1:t}^{\alpha})$ of player $\alpha$.

		For finite time-horizon game, $\mathbb{G}_{T}$, each player wants to maximize its total expected discounted reward over a time horizon $T$, discounted by discount factor $0<\delta\leq1$, 
		\begin{align}
			J_{Game}^{\alpha,T} :=\E^{\sigma} \left[\sum_{t=1}^T \delta^{t-1} R(X_t^{\alpha},A_t^{\alpha},\mu^G_t;g^{\alpha}) \right].
		\end{align} 

		For the infinite time-horizon game, $\mathbb{G}_{\infty}$, each player wants to maximize its total expected discounted reward over an infinite-time horizon discounted by a discount factor $0<\delta<1$, 
		\begin{align}
			J_{Game}^{\alpha,\infty} :=\E^{\sigma} \left[\sum_{t=1}^\infty \delta^{t-1} R(X_t^{\alpha},A_t^{\alpha},\mu^G_t;g^{\alpha}) \right].
		\end{align} 
		
		Similarly for finite time-horizon team, $\mathbb{T}_{T}$, all players wants to maximize their average total expected discounted reward over a time horizon $T$, discounted by discount factor $0<\delta\leq1$, 
		\begin{align}
			J_{Team}^{T} :=\E^{\sigma} \left[\sum_{t=1}^T \delta^{t-1} \sum_{x_t^{\alpha}}\mu_t^{\alpha}(x_t^{\alpha})R(x_t^{\alpha},A_t^{\alpha},\mu^G_t;g^{\alpha}) \right].
		\end{align} 

		For the infinite time-horizon team, $\mathbb{T}_{\infty}$, each player wants to maximize its total expected discounted reward over an infinite-time horizon discounted by a discount factor $0<\delta<1$, 
		\begin{align}
			J_{Team}^{\infty} :=\E^{\sigma} \left[\sum_{t=1}^\infty \delta^{t-1} \sum_{x_t^{\alpha}}\mu_t^{\alpha}(x_t^{\alpha}) R(x_t^{\alpha},A_t^{\alpha},\mu^G_t;g^{\alpha}) \right].
		\end{align}

	\subsection{Solution concept: \Gls{gmfe}}
% 		The Nash equilibrium (NE) of $\mathbb{G}_{T}$ is defined as strategies $\tsigma = (\tsigma_t^{\alpha})_{\alpha\in\left[0,1\right],t\in\left[T\right]}$ that satisfy, for all $\alpha\in\left[0,1\right]$, 
% 		\begin{align}
% 			\E^{(\tsigma^{\alpha},\tsigma^{-\alpha})}\left[\sum_{t=1}^T \delta^{t-1} R(X_t^{\alpha},A_t^{\alpha},\mu^G_t;g^{\alpha}) \right]\geq \E^{(\sigma^{\alpha},\tsigma^{-\alpha})}\left[\sum_{t=1}^T \delta^{t-1} R(X_t^{\alpha},A_t^{\alpha},\mu^G_t;g^{\alpha})\right],
% 		\end{align} 
		For graphon mean field games, notion of equilibrium is \gmfe~\cite{MaTi01}, which we use in this paper. A \gmfe $(\tsigma)$ satisfies sequential rationality such that for $\mathbb{G}_T$, $\forall\alpha\in\left[0,1\right], t \in \left[T\right], \mu^G_{1:t},x_{1:t}^{\alpha}, {\sigma^{\alpha}}$,
		\begin{align}
			\E^{(\tsigma^{\alpha} \tsigma^{-\alpha})}\left[\sum_{n=t}^T\delta^{n-t} R(X_n^{\alpha},A_n^{\alpha},\mu^G_n;g^{\alpha})|\mu^G_{1:t},x_{1:t}^{\alpha}\right] \geq
			\E^{({\sigma}^{\alpha} \tsigma^{-\alpha})}\left[\sum_{n=t}^T \delta^{n-t} R(X_n^{\alpha},A_n^{\alpha},\mu^G_n;g^{\alpha})|\mu^G_{1:t},x_{1:t}^{\alpha}\right], \;\; \;\;   \label{eq:seqeq2}
		\end{align}
		 \gmfe for $\mathbb{G}_{\infty}$ are defined in a similar way where summation in the above equations is taken such that $T$ is replaced by $\infty$.

		\subsection{Solution concept: Graphon mean field team optimal}

		For graphon mean field teams, we use the notion of optimality as follows. A policy $(\sigma^*)$ is team optimal if for $\forall\alpha\in\left[0,1\right], t \in \left[T\right], \mu^G_{1:t}, {\sigma^{\alpha}}$,
		\begin{align}
			\E^{\sigma^{*}}\left[\sum_{n=t}^T\delta^{n-t}\sum_{x_t^{\alpha}}\mu_t^{\alpha}(x_n^{\alpha}) R(x_n^{\alpha},A_n^{\alpha},\mu^G_n;g^{\alpha})|\mu^G_{1:t}\right] \geq
			\E^{\sigma}\left[\sum_{n=t}^T \delta^{n-t} \sum_{x_t^{\alpha}}\mu_t^{\alpha}(x_n^{\alpha})R(x_n^{\alpha},A_n^{\alpha},\mu^G_n;g^{\alpha})|\mu^G_{1:t}\right], \;\; \;\;   \label{eq:seqeq2}
		\end{align}
		The notion of optimality for $\mathbb{T}_{T}$ are defined in a similar way where summation in the above equations is taken such that $T$ is replaced by $\infty$.
\section{A methodology to compute \glspl{gmfg} }
		\label{sec:methodology}
	In this section, we will provide a backward recursive methodology to compute \glspl{gmfg} for both $\mathbb{G}_{T}$ and $\mathbb{G}_{\infty}$. We will consider Markovian equilibrium strategies of player $\alpha$ which depend on the common information at time $t$, $\mu^G_{t}$, and on its current type $x_t^{\alpha}$.\footnote{Note however, that the unilateral deviations of the player are considered in the space of all strategies.}  Equivalently, player $\alpha$ takes action of the form $A_t^{\alpha}\sim \sigma_t^{\alpha}(\cdot|\mu^G_t,x_t^{\alpha})$. Similar to the common agent approach in~\cite{NaMaTe13}, an alternate and equivalent way of defining the strategies of the players is as follows. We first generate partial function $\gamma_t^{\alpha}:\cX\to\cP(\cA)$ as a function of $\mu^G_t$ through an equilibrium generating function $\theta_t^{\alpha}:\cZ\to(\cX\to\cP(\cA))$ such that $\gamma_t^{\alpha} = \theta_t^{\alpha}[\mu^G_t]$. Then action $A_t^{\alpha}$ is generated by applying this prescription function $\gamma_t^{\alpha}$ on player $\alpha$'s current private information $x_t^{\alpha}$, i.e. $A_t^{\alpha}\sim \gamma_t^{\alpha}(\cdot|x_t^{\alpha})$. Thus $A_t^{\alpha}\sim \sigma_t^{\alpha}(\cdot|\mu^G_{t},x_t^{\alpha}) = \theta_t^{\alpha}[\mu^G_t](\cdot|x_t^{\alpha})$.
	
	For a given prescription function $\gamma^{\alpha}_t = \theta^{\alpha}[\mu^G_t]$, the graphon mean-field $\mu^G_t$ evolves according to the discrete-time McKean Vlasov equation, $\forall y\in\cX$ and $\forall \alpha\in[0,1]$:
	\begin{align}
		\mu^{\alpha}_{t+1}(y) =\sum_{x\in\cX}\sum_{a\in \cA} \mu^{\alpha}_t(x)\gamma^{\alpha}_t(a|x)Q\left(y|x,a,\mu^G_t;g^{\alpha}\right), \label{eq:muG_update}
	\end{align}
	which implies
	\begin{align}
		\mu^{\alpha}_{t+1}&= \phi(\mu^{\alpha}_t,\gamma^{\alpha}_t,\mu^G_t;g^{\alpha})\\
		\mu_{t+1}^G &= \phi(\mu_t^G,\gamma_t;g^{\alpha}) 
	\end{align}

	\subsection{Backward recursive algorithm for $\mathbb{G}_{T}$} \label{sec:fhbr1}
		In this subsection, we will provide a methodology to generate \gmfe of $\mathbb{G}_{T}$ of the form described above. We define an equilibrium generating function $(\theta_t)_{t\in[T]}$, where $\theta_t:\cZ\to\{\cX\to\mathcal{P}(\cA) \}$, where for each $\mu^G_t $, we generate $\tgamma_t = \theta_t[\mu^G_t]$. In addition, we generate a reward-to-go function $(V^{\alpha}_t)_{t\in[T]}$, where $V^{\alpha}_t:\cZ\times\cX\to\mathbb{R}$. These quantities are generated through a fixed-point equation as follows.
		\begin{enumerate}
			\item Initialize $\forall \mu^G_{T+1}, \alpha, x_{T+1}^{\alpha}\in \cX$,
			\begin{align}
				V^{\alpha}_{T+1}(\mu^G_{T+1},x_{T+1}^{\alpha}) \defeq 0.   \label{eq:VT+1}
			\end{align}

			\item For $t = T,T-1, \ldots 1, \ \forall \mu^G_t$, let $\theta_t[\mu^G_t] $ be generated as follows. Set $\tilde{\gamma}_t = (\tgamma_t^{\alpha})_{\alpha \in [0,1]} = \theta_t[\mu^G_t]$, where $\tilde{\gamma}_t$ is the solution of the following fixed-point equation\footnote{We discuss the existence of solution of this fixed-point equation in Section~\ref{sec:exists}}, $\forall \alpha\in[0,1], x_t^{\alpha}\in \cX$,
  			\begin{align}
				 \tilde{\gamma}^{\alpha}_t(\cdot|x_t^{\alpha}) \in  \arg\max_{\gamma_t^{\alpha}(\cdot|x_t^{\alpha})} \E^{\gamma_t^{\alpha}(\cdot|x_t^{\alpha})}\left[ R(X_t^{\alpha},A_t^{\alpha},\mu^G_t;g^{\alpha})+\delta V^{\alpha}_{t+1}(\phi(\mu^G_t,\tgamma_t^{\alpha};g^{\alpha}), X_{t+1}^{\alpha}) | \mu^G_t,x_t^{\alpha}\right] , \label{eq:m_FP}
			\end{align}
 			where expectation in \eqref{eq:m_FP} is with respect to random variable $(A_t^{\alpha},X_{t+1}^{\alpha})$ through the probability measure $\gamma_t(a_t^{\alpha}|x_t^{\alpha})Q^{\alpha}(x_{t+1}^{\alpha}|x_t^{\alpha},a_t^{\alpha},\mu^G_t;g^{\alpha})$. We note that the solution of~\eqref{eq:m_FP}, $\tgamma_t$, appears both on the left of~\eqref{eq:m_FP} and on the right side in the update of $\mu^G_t$, and is thus unlike the fixed-point equation found in Bayesian Nash equilibrium.

			Furthermore, using the quantity $\tilde{\gamma}_t$ found above, define $\forall \alpha$
			\begin{align}
				V^{\alpha}_{t}(\mu^G_t,x_t^{\alpha})\defeq\E^{\tilde{\gamma}^{\alpha}_t(\cdot|x^{\alpha})}\left[R(X_t^{\alpha},A_t^{\alpha},\mu^G_t;g^{\alpha})+\delta V^{\alpha}_{t+1}(\phi(\mu^G_t,\tgamma^{\alpha}_t), X_{t+1}^{\alpha}) | \mu^G_t,x_t^{\alpha}\right].  \label{eq:Vdef}
			\end{align}
   		\end{enumerate}
		Then, an equilibrium strategy is defined as 
		\begin{align}
			\tilde{\sigma}_t^{\alpha}(a_t^{\alpha}|\mu^G_{1:t},x_{1:t}^{\alpha}) = \tilde{\gamma}^{\alpha}_t(a_t^{\alpha}|x_t^{\alpha}), \label{eq:sigma_fh}
		\end{align}
		where $\tilde{\gamma}_t = \theta[\mu^G_t]$. 

		In the following theorem, we show that the strategy thus constructed is a \glspl{gmfg} of the game.

%%%%%%%%%%%%%
		\begin{theorem}
		\label{Thm:Main}
			A strategy $(\tsigma)$ constructed from the above algorithm is an \mpe of the game i.e. $\forall t, h^{\alpha}_t \in \cH^{\alpha}_t, {\sigma^{\alpha}}$,
			\begin{align}
				\label{eq:prop}
				\E^{(\tsigma^{\alpha} \tsigma^{-\alpha})}&\left[\sum_{n=t}^T \delta^{n-t} R(X_n^{\alpha},A_n^{\alpha},\mu^G_n;g^{\alpha})|\mu^G_{1:t},x_{1:t}^{\alpha}\right] \geq \nn\\
				&\E^{({\sigma}^{\alpha} \tsigma^{-\alpha})}\left[\sum_{n=t}^T \delta^{n-t} R(X_n^{\alpha},A_n^{\alpha},\mu^G_n;g^{\alpha})|\mu^G_{1:t},x_{1:t}^{\alpha}\right] 
			\end{align}
		\end{theorem}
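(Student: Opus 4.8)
The plan is to argue by backward induction on $t$, establishing two facts in tandem: (a) when player $\alpha$ follows $\tsigma^{\alpha}$, her reward-to-go from time $t$ conditioned on $(\mu^G_{1:t},x_{1:t}^{\alpha})$ equals $V^{\alpha}_t(\mu^G_t,x_t^{\alpha})$; and (b) for \emph{any} strategy $\sigma^{\alpha}\in\mathcal{K}^{\sigma}$, the corresponding reward-to-go against $\tsigma^{-\alpha}$ is at most $V^{\alpha}_t(\mu^G_t,x_t^{\alpha})$. Combining (a) and (b) at an arbitrary $t$ yields exactly \eqref{eq:prop}. The structural observation that makes the argument go through is that a unilateral change of strategy by the single player $\alpha$ leaves the population trajectory unchanged: $\mu^G_{1:T}$ is the deterministic sequence obtained by iterating $\mu^G_{t+1}=\phi(\mu^G_t,\tgamma_t;g^{\alpha})$ with $\tgamma_t=\theta_t[\mu^G_t]$, and since $\mu_t^{\alpha}(\cdot)=\mP\{x_t^{\alpha}=\cdot\}$ is a distribution rather than an empirical measure, player $\alpha$'s own choices never enter this recursion. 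Hence against $\tsigma^{-\alpha}$ player $\alpha$ faces a \emph{fixed}, exogenous environment that is Markovian in $(\mu^G_t,x_t^{\alpha})$, and the continuation value on the right-hand side of \eqref{eq:m_FP}, namely $V^{\alpha}_{t+1}(\phi(\mu^G_t,\tgamma_t;g^{\alpha}),\cdot)$, does not depend on the maximizing variable $\gamma_t$ — this is the point flagged just after \eqref{eq:m_FP}.

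\textbf{Step 1 (value consistency under $\tsigma^{\alpha}$).} I would first show by a short backward induction that $\E^{(\tsigma^{\alpha},\tsigma^{-\alpha})}[\sum_{n=t}^T\delta^{n-t}R(X_n^{\alpha},A_n^{\alpha},\mu^G_n;g^{\alpha})\mid\mu^G_{1:t},x_{1:t}^{\alpha}]=V^{\alpha}_t(\mu^G_t,x_t^{\alpha})$. The base case $t=T+1$ is \eqref{eq:VT+1}. For the inductive step one conditions on $(A_t^{\alpha},X_{t+1}^{\alpha})$, substitutes $\tsigma_t^{\alpha}(\cdot\mid\mu^G_{1:t},x_{1:t}^{\alpha})=\tgamma_t(\cdot\mid x_t^{\alpha})$ from \eqref{eq:sigma_fh}, uses the kernel $Q^{\alpha}(\cdot\mid x_t^{\alpha},a_t^{\alpha},\mu^G_t;g^{\alpha})$ and the determinism of $\mu^G_{t+1}=\phi(\mu^G_t,\tgamma_t;g^{\alpha})$, and then applies the induction hypothesis; the expression collapses to the defining equation \eqref{eq:Vdef} of $V^{\alpha}_t$.

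\textbf{Step 2 (optimality against $\tsigma^{-\alpha}$).} Fix $\sigma^{\alpha}\in\mathcal{K}^{\sigma}$ and set $W^{\alpha}_t(\mu^G_{1:t},x_{1:t}^{\alpha}):=\E^{(\sigma^{\alpha},\tsigma^{-\alpha})}[\sum_{n=t}^T\delta^{n-t}R(X_n^{\alpha},A_n^{\alpha},\mu^G_n;g^{\alpha})\mid\mu^G_{1:t},x_{1:t}^{\alpha}]$. I claim $W^{\alpha}_t(\mu^G_{1:t},x_{1:t}^{\alpha})\le V^{\alpha}_t(\mu^G_t,x_t^{\alpha})$ for all $t$ and all histories; again $t=T+1$ is immediate. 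For the inductive step, decompose $W^{\alpha}_t$ as the time-$t$ reward plus $\delta\,\E^{\sigma^{\alpha}}[W^{\alpha}_{t+1}(\mu^G_{1:t+1},X_{1:t+1}^{\alpha})\mid\mu^G_{1:t},x_{1:t}^{\alpha}]$, where $\mu^G_{t+1}=\phi(\mu^G_t,\tgamma_t;g^{\alpha})$ is the deterministic update unaffected by $\sigma^{\alpha}$; apply the induction hypothesis to get $W^{\alpha}_{t+1}\le V^{\alpha}_{t+1}(\phi(\mu^G_t,\tgamma_t;g^{\alpha}),X_{t+1}^{\alpha})$. The resulting upper bound for $W^{\alpha}_t$ depends on $\sigma^{\alpha}$ only through the action law $\sigma_t^{\alpha}(\cdot\mid\mu^G_{1:t},x_{1:t}^{\alpha})\in\cP(\cA)$, and for each such law equals $\E^{\gamma_t(\cdot\mid x_t^{\alpha})}[R(X_t^{\alpha},A_t^{\alpha},\mu^G_t;g^{\alpha})+\delta V^{\alpha}_{t+1}(\phi(\mu^G_t,\tgamma_t;g^{\alpha}),X_{t+1}^{\alpha})\mid\mu^G_t,x_t^{\alpha}]$; maximizing over $\gamma_t(\cdot\mid x_t^{\alpha})$ and invoking the $\arg\max$ characterization \eqref{eq:m_FP} of $\tgamma_t$ together with \eqref{eq:Vdef} gives $V^{\alpha}_t(\mu^G_t,x_t^{\alpha})$. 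Taking $t$ arbitrary and combining with Step 1 proves \eqref{eq:prop}.

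\textbf{Main obstacle.} The delicate point is the reduction in Step 2 of an arbitrary, possibly fully history-dependent deviation to a one-shot maximization over $\cP(\cA)$. This requires (i) showing that, conditioned on $(\mu^G_{1:t},x_{1:t}^{\alpha})$, the continuation $V^{\alpha}_{t+1}$ is a function of $x_{t+1}^{\alpha}$ alone via $Q^{\alpha}(\cdot\mid x_t^{\alpha},a_t^{\alpha},\mu^G_t;g^{\alpha})$ and of the \emph{already determined} population update $\phi(\mu^G_t,\tgamma_t;g^{\alpha})$ — not of $\gamma_t$ — so that the one-step objective is linear in the action law and the $\arg\max$ in \eqref{eq:m_FP} is genuinely attained by a best response that is Markovian in $(\mu^G_t,x_t^{\alpha})$; and (ii) a clean statement and proof of the invariance of $\mu^G_{1:T}$ under a unilateral deviation, which is precisely what decouples the forward McKean--Vlasov recursion from the backward Bellman recursion. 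One must also check that all conditional expectations and the selection $\tgamma_t$ are well defined in the \emph{a.s.} sense, and that existence of a solution to \eqref{eq:m_FP} (treated in Section~\ref{sec:exists}) is presupposed so that $\theta_t$ and $V^{\alpha}_t$ are well defined inputs to the induction.
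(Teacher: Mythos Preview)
Your proposal is correct and follows essentially the same approach as the paper. The paper organizes the argument into two lemmas---Lemma~\ref{lemma:1} (your Step~1, value consistency under $\tsigma$) and Lemma~\ref{lemma:2} (the one-step domination $V_t\ge\E^{\sigma_t^{\alpha}\tsigma_t^{-\alpha}}[R+\delta V_{t+1}\mid h_t^{\alpha}]$, proved by contradiction via constructing a $\hat{\gamma}_t$ from the deviating $\hat{\sigma}_t$)---and then runs the backward induction in Appendix~\ref{app:A}; you fold Lemma~\ref{lemma:2} directly into your Step~2 induction, but the substance, including the key reduction of a history-dependent deviation to a one-shot choice in $\cP(\cA)$ and the invariance of $\mu^G_{1:T}$ under a unilateral deviation, is identical.
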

		\begin{proof}
			Please see Appendix~\ref{app:A}.
		\end{proof}

	\subsection{Converse}
		In the following, we show that every \gmfe can be found using the above backward recursion.

		\begin{theorem}[Converse]
		\label{thm:2}
			Let $\tsigma$ be a \gmfe of the graphon mean field game. Then there exists an equilibrium generating function $\theta$ that satisfies \eqref{eq:m_FP} in backward recursion such that  $\tsigma$ is defined using $\theta$.
		\end{theorem}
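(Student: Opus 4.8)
The plan is to run the backward recursion of Section~\ref{sec:fhbr1} ``in reverse'': rather than building $(\theta_t,V^\alpha_t)$ and certifying optimality as in Theorem~\ref{Thm:Main}, I start from the given \gmfe $\tsigma$ of $\mathbb{G}_T$, \emph{define} $V^\alpha_t$ to be its continuation value, and \emph{check} that the prescription $\tsigma$ induces at time $t$ is a solution of~\eqref{eq:m_FP}. Since a \gmfe is, by the convention of Section~\ref{sec:methodology}, an \mpe in Markovian strategies, we may write $\tsigma_t^\alpha(a_t^\alpha\mid\mu^G_{1:t},x_{1:t}^\alpha)=\tsigma_t^\alpha(a_t^\alpha\mid\mu^G_t,x_t^\alpha)$. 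Let $(\bar\mu^G_t)_{t\in[T]}$ be the population trajectory generated from $\tsigma$ via~\eqref{eq:muG_update}. For a population state $\mu^G_t$ lying on this trajectory set $\tgamma_t(\cdot\mid x_t^\alpha):=\tsigma_t^\alpha(\cdot\mid\mu^G_t,x_t^\alpha)$ and $\theta_t[\mu^G_t]:=\tgamma_t$; for $\mu^G_t$ off the trajectory let $\theta_t[\mu^G_t]$ be \emph{any} solution of~\eqref{eq:m_FP} with the (already constructed) $V^\alpha_{t+1}$, which exists by the argument of Section~\ref{sec:exists}. The off-path choices are immaterial: a single atomless player cannot move the common population state, so only on-path population histories are reached under any $(\sigma^\alpha,\tsigma^{-\alpha})$.

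Proceed by induction on $t=T,T-1,\dots,1$, starting from $V^\alpha_{T+1}\equiv 0$ as in~\eqref{eq:VT+1}. Given $V^\alpha_{t+1}$, define for on-path $\mu^G_t$
\[
V^\alpha_t(\mu^G_t,x^\alpha_t):=\E^{\tsigma}\!\left[\sum_{n=t}^{T}\delta^{n-t}R(X^\alpha_n,A^\alpha_n,\mu^G_n;g^\alpha)\,\Big|\,\mu^G_t,x^\alpha_t\right],
\]
and for off-path $\mu^G_t$ by~\eqref{eq:Vdef} with the $\theta_t[\mu^G_t]$ fixed above. The key observation is that the on-path $V^\alpha_t$ depends only on $(\mu^G_t,x^\alpha_t)$: conditioned on $\mu^G_t$, the future population states $\mu^G_{t+1},\dots,\mu^G_T$ are obtained by deterministically iterating the map $\mu^G_s\mapsto\phi(\mu^G_s,\tgamma_s;g^\alpha)$, while player $\alpha$'s private type is a Markov chain driven by $\tgamma$ and $Q^\alpha$. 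A one-step conditioning then gives the Bellman-type identity
\[
V^\alpha_t(\mu^G_t,x^\alpha_t)=\E^{\tgamma_t(\cdot\mid x^\alpha_t)}\!\left[R(X^\alpha_t,A^\alpha_t,\mu^G_t;g^\alpha)+\delta\,V^\alpha_{t+1}\big(\phi(\mu^G_t,\tgamma_t;g^\alpha),X^\alpha_{t+1}\big)\,\Big|\,\mu^G_t,x^\alpha_t\right],
\]
which is exactly~\eqref{eq:Vdef}; hence the $V^\alpha_t$ just defined equals what the forward algorithm would have produced from $\tgamma_t$.

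It remains to verify that $\tgamma_t$ solves the $\arg\max$ in~\eqref{eq:m_FP}. Suppose not: for some reachable $(\mu^G_t,x_t^\alpha)$ there is $\gamma_t'(\cdot\mid x_t^\alpha)$ with strictly larger value of $\E^{\gamma_t'(\cdot\mid x_t^\alpha)}\!\big[R(X_t^\alpha,A_t^\alpha,\mu^G_t;g^\alpha)+\delta V^\alpha_{t+1}(\phi(\mu^G_t,\tgamma_t;g^\alpha),X_{t+1}^\alpha)\mid\mu^G_t,x_t^\alpha\big]$; note the argument of $\phi$ is still the equilibrium prescription $\tgamma_t$, not $\gamma_t'$, precisely because player $\alpha$'s deviation leaves $\mu^G_{t+1}$ (and hence the whole future population trajectory) unchanged. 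Now consider the strategy for player $\alpha$ that plays $\gamma_t'$ at time $t$ and reverts to $\tsigma^\alpha$ afterwards. Using the identity above at times $n>t$, where $V^\alpha_n$ is the $\tsigma^\alpha$-continuation value, this one-shot deviation strictly increases $\E[\sum_{n=t}^{T}\delta^{n-t}R\mid\mu^G_{1:t},x_{1:t}^\alpha]$ above its value under $\tsigma^\alpha$, contradicting the sequential rationality~\eqref{eq:seqeq2} of the \gmfe. Hence $\tgamma_t=\theta_t[\mu^G_t]$ satisfies~\eqref{eq:m_FP}, closing the induction; after $T$ steps we obtain an equilibrium generating function $\theta$ that obeys the backward recursion and with $\tsigma_t^\alpha(a_t^\alpha\mid\mu^G_{1:t},x_{1:t}^\alpha)=\theta_t[\mu^G_t](a_t^\alpha\mid x_t^\alpha)$ along every population trajectory arising under $\tsigma$.

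I expect the main obstacle to be the bookkeeping forced by the self-referential structure of~\eqref{eq:m_FP}: one must consistently separate the role of the equilibrium prescription $\tgamma_t$ (which governs the propagation of the common population state and therefore sits inside $\phi$) from that of a candidate deviation $\gamma_t'$ (which affects only player $\alpha$'s own reward and private transition), and one must justify — via player $\alpha$ being an atomless member of the continuum — that no unilateral deviation perturbs $\mu^G_{1:T}$, so that the one-shot deviation argument applies verbatim and, in particular, $V^\alpha_{t+1}$ is never evaluated off the equilibrium population trajectory when checking~\eqref{eq:m_FP}. A secondary, more cosmetic, point is to make precise the phrase ``$\tsigma$ is defined using $\theta$'': it should be read as the displayed equality holding on the set of population trajectories that occur under $\tsigma$ (an \emph{a.s.} statement), with $\theta$ on the remaining, never-reached population states chosen arbitrarily among solutions of~\eqref{eq:m_FP}.
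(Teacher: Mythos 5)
Your core argument is the same as the paper's (Appendix~\ref{app:C}): set $\tgamma_t=\tsigma_t(\cdot|\mu^G_t,\cdot)$, identify $V^{\alpha}_t$ with the continuation value of $\tsigma$ (this is exactly Lemma~\ref{lemma:1}), and derive a contradiction with sequential rationality~\eqref{eq:seqeq2} via a one-shot deviation, using precisely the observation that the deviation leaves $\phi(\mu^G_t,\tgamma_t;g^{\alpha})$ — and hence the future population trajectory — untouched. Up to that point the proposal is sound and matches the paper step for step.

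The genuine problem is your treatment of population states off the equilibrium trajectory. You define $\theta_t[\mu^G_t]$ there as ``any solution of~\eqref{eq:m_FP}, which exists by the argument of Section~\ref{sec:exists}.'' That is circular: the existence result of Section~\ref{sec:exists} is itself proved \emph{from} Theorem~\ref{thm:2} (and Theorem~\ref{thm:2ih}) together with the Caines--Huang existence of a \gmfe, and it needs assumptions (A1)--(A5) which Theorem~\ref{thm:2} does not impose. So as written, your construction of $\theta$ has a hole exactly where you patched it with existence, and you are forced to weaken the conclusion to an ``on-path, a.s.'' statement. The paper avoids both issues: because the solution concept~\eqref{eq:seqeq2} is an \mpe, sequential rationality is required at \emph{every} history $h^{\alpha}_t\in\cH^{\alpha}_t$, including those whose population component never arises under $\tsigma$; hence one can define $\theta_t[\mu^G_t]=\tsigma_t(\cdot|\mu^G_t,\cdot)$ for all $\mu^G_t$ and run your same contradiction argument (first time $t$ in the backward recursion at which the argmax in~\eqref{eq:m_FP} fails, deviation $\hat{\gamma}_t$, Lemmas~\ref{lemma:1} and~\ref{lemma:2}) at an arbitrary $\mu^G_{1:t}$, with no appeal to existence and no on/off-path split. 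The fix is easy — drop the off-path completion and verify~\eqref{eq:m_FP} everywhere — but as submitted the off-path part of your $\theta$ is not justified.
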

		\begin{proof}
			Please see Appendix~\ref{app:C}.
		\end{proof}

	\subsection{Backward recursive algorithm for $\mathbb{G}_{\infty}$} \label{sec:fhbr}
		In this section, we consider the infinite-horizon problem $\mathbb{G}_{\infty}$, for which we assume the reward function $R$ to be absolutely bounded.

		We define an equilibrium generating function $\theta:\cZ\to\{\cX\to\mathcal{P}(\cA) \}$, where for each $\mu^G_t $, we generate $\tgamma_t = \theta[\mu^G_t]$. In addition, we generate a reward-to-go function $V:\cZ\times\cX\to\mathbb{R}$.
		These quantities are generated through a fixed-point equation as follows.

		For all $\mu^G,$ set $\tilde{\gamma} = \theta[\mu^G]$. Then $(\tilde{\gamma},V)$ are solution of the following fixed-point equation\footnote{We discuss the existence of solution of this fixed-point equation in Section~\ref{sec:exists}}, $\forall \mu^G,x^{\alpha}\in \cX$,
		\begin{align}
			\label{eq:m_FP_ih}
			\tilde{\gamma}^{\alpha}(\cdot|x^{\alpha}) \in \arg\max_{\gamma^{\alpha}(\cdot|x^{\alpha})}&\E^{\gamma^{\alpha}(\cdot|x^{\alpha})} \left[R(x^{\alpha},A^{\alpha},\mu^G;g^{\alpha})+\delta V^{\alpha}(\phi(\mu^G,\tgamma;g^{\alpha}), X^{{\alpha}'};g^{\alpha}) | \mu^G,x^{\alpha}\right] ,\\
			V^{\alpha}(\mu^G,x^{\alpha}) =\  &\E^{\tilde{\gamma}(\cdot|x^{\alpha})} \left[ R(x^{\alpha},A^{\alpha},\mu^G;g^{\alpha})+\delta V^{\alpha} (\phi(\mu^G,\tgamma;g^{\alpha}), X^{{\alpha}'}) | \mu^G,x^{\alpha}\right]. 
		\end{align}
 		where expectation in \eqref{eq:m_FP_ih} is with respect to random variable $(A^{\alpha},X^{{\alpha},\prime})$ through the measure $\gamma(a^{\alpha}|x^{\alpha})Q^{\alpha}(x^{{\alpha}'}|x^{\alpha},a^{\alpha},\mu^G)$.

		Then an equilibrium strategy is defined as 
		\begin{align}
			\tilde{\sigma}^{\alpha}(a_t^{\alpha}|\mu^G_{1:t},x_{1:t}^{\alpha}) = \tilde{\gamma}(a_t^{\alpha}|x_t^{\alpha}), \label{eq:sigma_ih}
		\end{align}
		where $\tilde{\gamma} = \theta[\mu^G_t]$.

		The following theorem shows that the strategy thus constructed is a \gmfe of the game.

		\begin{theorem}
		\label{thih}
			A strategy $(\tsigma)$ constructed from the above algorithm is a \gmfe of the game i.e. $\forall t, h^{\alpha}_t \in \cH^{\alpha}_t, {\sigma^{\alpha}}$,

			\begin{align}
				\label{eq:prop_ih}
				\E^{(\tsigma^{\alpha} \tsigma^{-\alpha})}&\left[\sum_{n=t}^\infty \delta^{n-t} R(X_n^{\alpha},A_n^{\alpha},\mu^G_n;g^{\alpha})|\mu^G_{1:t},x_{1:t}^{\alpha}\right] \geq\nn\\
				&\E^{({\sigma}^{\alpha} \tsigma^{-\alpha})}\left[\sum_{n=t}^\infty \delta^{n-t} R(X_n^{\alpha},A_n^{\alpha},\mu^G_n;g^{\alpha})|\mu^G_{1:t},x_{1:t}^{\alpha}\right], 
			\end{align} 
		\end{theorem}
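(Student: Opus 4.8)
The plan is to mirror the common-agent backward-induction argument behind Theorem~\ref{Thm:Main} (Appendix~\ref{app:A}), adding two limiting steps to pass from a truncated to the infinite horizon. Fix player $\alpha$ and suppose every other player uses $\tsigma^{-\alpha}$. The first and central observation is that, because each individual player carries zero mass, the graphon mean-field trajectory $(\mu^G_n)_{n\ge 1}$ produced by $\theta$ through $\mu^G_{n+1}=\phi(\mu^G_n,\tgamma_n;g^{\alpha})$ with $\tgamma_n=\theta[\mu^G_n]$ is a deterministic sequence that does \emph{not} depend on any unilateral deviation $\sigma^{\alpha}$ of player $\alpha$. This is exactly why the argument of $V$ in \eqref{eq:m_FP_ih} is $\phi(z,\tgamma)$ evaluated at the \emph{equilibrium} prescription rather than at the maximizing $\gamma$: when player $\alpha$ deviates, only her own next type $X_{n+1}^{\alpha}$ is affected, while $\mu^G_{n+1}$ stays on the equilibrium path, so \eqref{eq:m_FP_ih} is a genuine per-stage maximization rather than a coupled Bayesian Nash fixed point.

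First I would identify $V$ with the reward-to-go of $\tsigma$, i.e. show that for all $t$,
\[
V(\mu^G_t,x_t^{\alpha})=\E^{\tsigma}\left[\sum_{n=t}^{\infty}\delta^{n-t}R(X_n^{\alpha},A_n^{\alpha},\mu^G_n;g^{\alpha})\ \big|\ \mu^G_{1:t},x_{1:t}^{\alpha}\right].
\]
This follows by unrolling the Bellman equation for $V$ in \eqref{eq:m_FP_ih} $k$ times to obtain $V(\mu^G_t,x_t^{\alpha})=\E^{\tsigma}\big[\sum_{n=t}^{t+k-1}\delta^{n-t}R(\cdots)+\delta^{k}V(\mu^G_{t+k},X_{t+k}^{\alpha})\,\big|\,\mu^G_{1:t},x_{1:t}^{\alpha}\big]$, then using $|V|\le\|R\|_{\infty}/(1-\delta)$ (valid since $R$ is absolutely bounded) so that $\delta^{k}V(\mu^G_{t+k},X_{t+k}^{\alpha})\to 0$ as $k\to\infty$, and dominated convergence, the partial sums being dominated by the summable bound $\sum_{n\ge t}\delta^{n-t}\|R\|_{\infty}$.

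Next, for an arbitrary $\sigma^{\alpha}$, horizon $T$, and $t\le T+1$, set
\[
W^{\sigma^{\alpha}}_{t}:=\E^{(\sigma^{\alpha},\tsigma^{-\alpha})}\left[\sum_{n=t}^{T}\delta^{n-t}R(X_n^{\alpha},A_n^{\alpha},\mu^G_n;g^{\alpha})+\delta^{T+1-t}V(\mu^G_{T+1},X_{T+1}^{\alpha})\ \big|\ \mu^G_{1:t},x_{1:t}^{\alpha}\right],
\]
and prove by backward induction on $t$ (from $T+1$ down) that $W^{\sigma^{\alpha}}_{t}\le V(\mu^G_t,x_t^{\alpha})$, with equality when $\sigma^{\alpha}=\tsigma^{\alpha}$. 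The base case $t=T+1$ is the definition. For the inductive step, condition on $(\mu^G_{1:t},x_{1:t}^{\alpha})$: the action $A_t^{\alpha}\sim\sigma_t^{\alpha}(\cdot|\mu^G_{1:t},x_{1:t}^{\alpha})$ is some distribution over $\cA$, hence feasible in the maximization of \eqref{eq:m_FP_ih}; applying the tower property, the fact that $\mu^G_{t+1}=\phi(\mu^G_t,\tgamma_t;g^{\alpha})$ is deterministic and deviation-free, and the induction hypothesis $W^{\sigma^{\alpha}}_{t+1}\le V(\mu^G_{t+1},X_{t+1}^{\alpha})$, gives $W^{\sigma^{\alpha}}_{t}\le\E^{\sigma_t^{\alpha}}\big[R(X_t^{\alpha},A_t^{\alpha},\mu^G_t;g^{\alpha})+\delta V(\phi(\mu^G_t,\tgamma_t;g^{\alpha}),X_{t+1}^{\alpha})\,\big|\,\mu^G_t,x_t^{\alpha}\big]\le V(\mu^G_t,x_t^{\alpha})$, the last step being the definition of $\tgamma_t$ as a maximizer; for $\sigma^{\alpha}=\tsigma^{\alpha}$ each inequality is an equality by the definition of $V$. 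Finally, fixing $t$ and letting $T\to\infty$, the boundary term obeys $|\delta^{T+1-t}V(\mu^G_{T+1},X_{T+1}^{\alpha})|\le\delta^{T+1-t}\|R\|_{\infty}/(1-\delta)\to0$, so dominated convergence yields $W^{\sigma^{\alpha}}_{t}\to\E^{(\sigma^{\alpha},\tsigma^{-\alpha})}\big[\sum_{n=t}^{\infty}\delta^{n-t}R(\cdots)\,\big|\,\mu^G_{1:t},x_{1:t}^{\alpha}\big]$; combining this with the value identification of the previous paragraph (which, for $\sigma^{\alpha}=\tsigma^{\alpha}$, gives the left side of \eqref{eq:prop_ih}) produces \eqref{eq:prop_ih} for every history $h_t^{\alpha}$, using that $\tsigma_t^{\alpha}$ depends on the history only through $(\mu^G_t,x_t^{\alpha})$ via \eqref{eq:sigma_ih}.

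The main obstacle is conceptual rather than computational: it is the structural claim that a single player's deviation leaves the graphon mean-field trajectory unchanged, making the argument of $V$ in \eqref{eq:m_FP_ih} exogenous to the deviator's optimization and turning the fixed point into a decoupled per-stage problem --- this is what lets the backward induction close. The remaining ingredients are routine given $\|R\|_{\infty}<\infty$ and $\delta<1$; the only technical care needed is in making precise the measurability and well-posedness of $\theta$, $V$, and the induced sequence $(\mu^G_n)$, and in invoking existence of a solution to \eqref{eq:m_FP_ih}, which is deferred to Section~\ref{sec:exists}.
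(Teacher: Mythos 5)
Your proposal is correct and takes essentially the same route as the paper's proof in Appendix~\ref{app:D}: truncate at horizon $T$ with terminal reward $V$, show by backward induction that $V$ dominates the truncated reward-to-go of any deviation (the paper packages this as Lemmas~\ref{thmfh1}--\ref{lemfhtoih} with a generic terminal reward $G\equiv V$), let $T\to\infty$ using $\|R\|_\infty<\infty$ and $\delta<1$, and identify $V$ with the reward-to-go under $\tsigma$ by unrolling the Bellman equation. The differences (merging the finite-horizon lemmas into a single induction, and unrolling $V$ directly instead of the paper's sup-norm difference argument in Part 2) are purely organizational.
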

		\begin{proof}
			Please see Appendix~\ref{app:D}.
		\end{proof}

	\subsection{Converse}
		In the following, we show that every \gmfe can be found using the above backward recursion.
		\begin{theorem}[Converse]
		\label{thm:2ih}
			Let $\tsigma$ be a \gmfe the graphon mean field game. Then there exists an equilibrium generating function $\theta$
			that satisfies \eqref{eq:m_FP} in backward recursion such that  $\tsigma$ is defined using $\theta$.
		\end{theorem}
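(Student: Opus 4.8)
The plan is to mirror the finite-horizon converse (Theorem~\ref{thm:2}), the two new ingredients being that the equilibrium generating function is now stationary and that the improvement step must invoke the one-shot-deviation principle for discounted rewards, which is legitimate here because $R$ is assumed absolutely bounded and $0<\delta<1$.

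First I would fix a \gmfe $\tsigma$. Since all players are symmetric and, along the equilibrium path, the common information $\mu^G_t$ is pinned down by the others' (common) prescription through the McKean--Vlasov map $\phi(\cdot,\tgamma_t;g^{\alpha})$ and cannot be influenced by a single player $\alpha$, player $\alpha$'s best-response problem against $\tsigma^{-\alpha}$ is a single-agent time-homogeneous \mdp with state $(\mu^G_t,x_t^{\alpha})$. The standard reduction for such an \mdp shows that the continuation strategy prescribed by $\tsigma$ from any reachable history is, without loss of optimality, a function of the pair $(\mu^G_t,x_t^{\alpha})$ only, so I can define $\theta[\mu^G](\cdot|x^{\alpha}):=\tsigma^{\alpha}(\cdot|\mu^G,x^{\alpha})$ for every reachable population state $\mu^G$; on the remaining population states I set $\theta$ equal to any solution of the fixed-point equation~\eqref{eq:m_FP_ih}, whose existence is established in Section~\ref{sec:exists}. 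I would likewise let $V(\mu^G,x^{\alpha})$ be the continuation payoff of player $\alpha$ under $\tsigma$ at $(\mu^G,x^{\alpha})$ on reachable states and the value of the chosen fixed-point solution elsewhere; boundedness of $R$ and $0<\delta<1$ make $V$ well defined and bounded, and conditioning on the time-$t$ action and on $X_{t+1}^{\alpha}$ (with $\mu^G_{t+1}=\phi(\mu^G_t,\tgamma_t;g^{\alpha})$) yields the consistency (second) equation of~\eqref{eq:m_FP_ih}.

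The core step is to show that $\tgamma=\theta[\mu^G]$ attains the $\arg\max$ in the first line of~\eqref{eq:m_FP_ih} at every reachable $\mu^G$ and every $x^{\alpha}$. I would argue by contradiction: if some $\gamma'(\cdot|x^{\alpha})$ strictly improved the one-stage objective $\E^{\gamma'(\cdot|x^{\alpha})}[R(x^{\alpha},A^{\alpha},\mu^G;g^{\alpha})+\delta V(\phi(\mu^G,\tgamma;g^{\alpha}),X^{{\alpha}\prime})]$, then the strategy that plays $\gamma'$ the first time $\alpha$'s observed history is consistent with $(\mu^G,x^{\alpha})$ and reverts to $\tsigma^{\alpha}$ afterwards would, since it does not perturb the trajectory of $\mu^G$, raise $\alpha$'s total discounted payoff by the one-shot-deviation principle, contradicting~\eqref{eq:prop_ih}. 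Hence $\theta$ satisfies~\eqref{eq:m_FP_ih}, and $\tsigma$ is recovered from it through~\eqref{eq:sigma_ih}.

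The main obstacle I expect is the bookkeeping around non-reachable population states: because the $\arg\max$ equation references $V$ at the successor $\phi(z,\tgamma;g^{\alpha})$, it must hold for \emph{all} population states $z$, not only along the equilibrium path, so one has to choose $\theta$ and $V$ off the equilibrium path coherently (for instance by starting from a single global solution furnished by Section~\ref{sec:exists} and overwriting it on the reachable set) and verify that the reachable set is forward-invariant under the on-path map $\phi$, so that the two definitions of $V$ never conflict. A secondary point needing care, though routine, is the justification of the one-shot-deviation principle itself, which rests on continuity at infinity of the discounted payoff --- guaranteed precisely by the standing assumptions on $R$ and $\delta$ for $\mathbb{G}_{\infty}$.
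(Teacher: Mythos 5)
Your core argument is the paper's own: Appendix~\ref{app:idih} likewise sets $\theta[\mu^G_t]=\tsigma(\cdot|\mu^G_t,\cdot)$, identifies $V$ with the reward-to-go under $\tsigma$ (via Part~2 of Appendix~\ref{app:D}), and derives a contradiction by letting the deviating player use a strictly improving prescription $\hat\gamma_t$ once and revert to $\tsigma$ afterwards, exploiting that a single (measure-zero) player does not move $\mu^G_{t+1}=\phi(\mu^G_t,\tgamma_t;g^{\alpha})$. So the one-shot-deviation contradiction establishing the $\arg\max$ property in \eqref{eq:m_FP_ih} coincides with the paper's proof.

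Where your proposal goes astray is the bookkeeping you flag as the "main obstacle". First, the restriction to reachable population states is unnecessary: the \gmfe/\mpe condition \eqref{eq:prop_ih} is a perfectness requirement imposed at \emph{every} history $h^{\alpha}_t\in\cH^{\alpha}_t$, not only along the McKean--Vlasov orbit, so your deviation argument can be run at every $\mu^G$ directly and no forward-invariance discussion is needed. Second, your off-path patch is not innocuous: overwriting $\theta$ and $V$ on non-reachable states with a solution supplied by Section~\ref{sec:exists} imports Assumptions (A1)--(A5), which the converse theorem does not assume, and produces a $\theta$ that in general disagrees with $\tsigma(\cdot|\mu^G,\cdot)$ at those states, so \eqref{eq:sigma_ih} fails there and the conclusion that $\tsigma$ is defined through $\theta$ no longer holds for the full strategy. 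Third, the "standard single-agent MDP reduction" only yields the \emph{existence} of an optimal Markov reply; it does not show that the given $\tsigma$ is itself a function of $(\mu^G_t,x^{\alpha}_t)$, which is what the definition $\theta[\mu^G](\cdot|x^{\alpha}):=\tsigma^{\alpha}(\cdot|\mu^G,x^{\alpha})$ requires. Like the paper, you should simply take the \gmfe to lie in the Markov class of Section~\ref{sec:methodology}; with that reading, drop the reachability patch and apply your contradiction at every $\mu^G$, and the proof matches the paper's.
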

		\begin{proof}
			Please see Appendix~\ref{app:idih}.
		\end{proof}

\section{Existence}
	\label{sec:exists}
	In this section, we discuss sufficient conditions for the existence of a solution of the fixed-point equations~\eqref{eq:m_FP} and~\eqref{eq:m_FP_ih}. 
	\begin{assumption}[A1]
		The action set $\cA$ is a compact set.	
	\end{assumption}

	\begin{assumption}[A2]
			$\tilde{f}\left[x_t^\alpha,a_t^\alpha,\mu_t^G;g_\alpha\right]$ and $R(x_t^{\alpha},a_t^{\alpha},\mu^G_t;g^{\alpha})$ are Lipschitz continuous in $x_t^\alpha$ and uniformly continuous with respect to $a_t^{\alpha}$.
	\end{assumption}
	
	\begin{assumption}[A3]
			The first and second derivatives of $\tilde{f}\left[x_t^\alpha,a_t^\alpha,\mu_t^G;g_\alpha\right]$ and $R(x_t^{\alpha},a_t^{\alpha},\mu^G_t;g^{\alpha})$ with respect to $x_t^\alpha$ are continuous and bounded.
	\end{assumption}
	
	\begin{assumption}[A4]
			$\tilde{f}\left[x_t^\alpha,a_t^\alpha,\mu_t^G;g_\alpha\right]$ are Lipschitz continuous in $a_t^\alpha$ and uniformly continuous with respect to $x_t^{\alpha}$.
	\end{assumption}
	
	\begin{assumption}[A5]
		For any $v\in\mathbb{R}$, $\alpha\in\left[0,1\right]$ and any probability measure ensemble $\mu_G$, the set
		\begin{align}
			S\left(x_t^\alpha, v\right)=\arg\min_{a_t^\alpha}[v\left(\tilde{f}\left[x_t^\alpha,a_t^\alpha,\mu_t^G;g_\alpha\right]\right) 
			+ R(x_t^{\alpha},a_t^{\alpha},\mu^G_t;g^{\alpha})]
		\end{align}
			is a singleton and the resulting $a_t^\alpha$ as a function of $\left(x_t^\alpha,v\right)$ is Lipschitz continuous in $\left(x_t^\alpha,v\right)$ and uniform with respect to $\mu_t^G$ and $g_\alpha$. 
	\end{assumption}
	\begin{theorem}
		Under assumptions~(A1)-(A5), there exists a solution of the fixed-point equations~\eqref{eq:m_FP} and~\eqref{eq:m_FP_ih} for every $t$.
	\end{theorem}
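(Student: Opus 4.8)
The plan is to establish existence of a fixed point of the map that takes a candidate prescription $\gamma_t$ (equivalently, an equilibrium generating function $\theta_t$), computes the induced forward evolution $\mu^G_{t+1}=\phi(\mu^G_t,\gamma_t;g^\alpha)$, solves the per-agent optimization in \eqref{eq:m_FP} (for $\mathbb{G}_T$) or \eqref{eq:m_FP_ih} (for $\mathbb{G}_\infty$) with the continuation value $V^\alpha_{t+1}$ frozen, and returns the new prescription. I would apply a topological fixed-point theorem (Schauder or Kakutani--Fan--Glicksberg) to this self-map on an appropriate compact convex set of prescriptions. For the finite-horizon case the argument is by backward induction on $t$: at $t=T+1$ the value is $0$ by \eqref{eq:VT+1}; assuming $V^\alpha_{t+1}$ is continuous and bounded in $(\mu^G_{t+1},x^\alpha)$, I show the fixed-point map at stage $t$ has a solution and that the resulting $V^\alpha_t$ defined by \eqref{eq:Vdef} is again continuous and bounded, closing the induction. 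For the infinite-horizon case I would instead set up a contraction/Banach argument in $V$ (using $0<\delta<1$ and boundedness of $R$ from the standing assumption) combined, at each Bellman iterate, with the same compactness-based fixed point in $\gamma$.

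The key steps, in order, are: (i) fix the relevant function spaces — the space of prescriptions $\gamma(\cdot\mid x^\alpha)\in\cP(\cA)$ as measurable maps $\cX\to\cP(\cA)$, which under (A1) (compact $\cA$) makes $\cP(\cA)$ compact in the weak topology and the prescription space compact in the product/topology of interest; (ii) show the forward map $\phi$ in \eqref{eq:muG_update} is continuous in $(\mu^G_t,\gamma_t)$ — this follows from continuity of the kernel $Q^\alpha$, which is inherited from the Lipschitz/continuity assumptions (A2)--(A4) on $\tilde f$ and the structural form \eqref{eq:func1}--\eqref{eq:func2}; (iii) show the objective in \eqref{eq:m_FP}, namely $\E^{\gamma_t}[R+\delta V^\alpha_{t+1}(\phi(\mu^G_t,\gamma_t;g^\alpha),X^\alpha_{t+1})]$, is jointly continuous in the prescription and upper semicontinuous in the maximizing variable, using (A2)--(A3) for $R$ and the induction hypothesis on $V^\alpha_{t+1}$; (iv) invoke Berge's maximum theorem to get that the $\arg\max$ correspondence is upper hemicontinuous with nonempty compact values, and convex-valued because the objective is linear in $\gamma_t(\cdot\mid x^\alpha)$; (v) apply Kakutani--Fan--Glicksberg to the composed self-map to obtain $\tilde\gamma_t$; and (vi) verify via (A5) — singleton-valued, Lipschitz minimizer — that the fixed point is in fact well-behaved (single-valued selection, continuity of $\theta_t$ in $\mu^G_t$), and that $V^\alpha_t$ from \eqref{eq:Vdef} inherits continuity and boundedness so the induction propagates; then repeat the analogous reasoning for \eqref{eq:m_FP_ih} with the added Banach fixed-point step for $V$.

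The main obstacle I anticipate is step (ii)--(iii) done \emph{jointly}: the prescription $\tilde\gamma_t$ appears simultaneously inside the $\arg\max$ and inside the argument $\phi(\mu^G_t,\tilde\gamma_t;g^\alpha)$ of the continuation value — the paper itself flags this as unlike an ordinary Bayesian-Nash fixed point. So I cannot simply fix the mean-field flow and optimize; I must handle the coupled map $\gamma_t\mapsto \text{BR}\big(\gamma_t,\phi(\mu^G_t,\gamma_t;g^\alpha)\big)$, which requires the continuation value $V^\alpha_{t+1}$ to depend continuously on the \emph{entire} ensemble $\mu^G_{t+1}=\{\mu^{\beta}_{t+1}\}_\beta$, with modulus uniform in $\alpha$; controlling this uniformity through the graphon integral in \eqref{eq:func2} (measurability and boundedness of $W$, plus Lipschitzness in (A2)) is the delicate part. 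A secondary technical point is ensuring the fixed-point set of prescriptions over all $\alpha\in[0,1]$ is compact in a topology strong enough to make $\phi$ continuous yet weak enough to retain compactness — I would work with the $L^2([0,1];\cP(\cA))$-type topology or, if needed, restrict attention (as is standard in the graphon literature) to piecewise-Lipschitz-in-$\alpha$ prescriptions so that the Arzelà--Ascoli compactness applies.
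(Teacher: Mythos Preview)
Your proposal takes a genuinely different route from the paper. The paper's proof is indirect and essentially two lines: it cites \cite{CaHu18} for the existence of a \gmfe under (A1)--(A5), then invokes the converse results, Theorems~\ref{thm:2} and~\ref{thm:2ih} (which show that \emph{every} \gmfe arises from some equilibrium generating function $\theta$ satisfying \eqref{eq:m_FP} or \eqref{eq:m_FP_ih}), to conclude that the fixed-point equations must admit a solution at every $t$. Your approach, by contrast, attacks the fixed-point equations directly via a Kakutani--Fan--Glicksberg argument on the coupled best-response map $\gamma_t\mapsto \mathrm{BR}(\gamma_t,\phi(\mu^G_t,\gamma_t;g^\alpha))$, combined with backward induction on $t$ (finite horizon) or a Banach contraction in $V$ (infinite horizon). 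The paper's route is far more economical because it outsources all the delicate analysis --- continuity of $\phi$ in the graphon ensemble, compactness of prescription spaces over the continuum $\alpha\in[0,1]$, uniform regularity in $\alpha$ --- to \cite{CaHu18}, and leverages the converse theorems it has already proved for independent reasons. Your direct route is self-contained and would, if carried out, give a standalone existence proof; but the obstacles you correctly identify (joint continuity with the prescription appearing on both sides, and compactness of the prescription space indexed by $\alpha\in[0,1]$) are precisely the technical content of the existence result in \cite{CaHu18}, so you would in effect be reproving that result rather than citing it.
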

	
	\proof
		Under the assumption (A1)-(A5), it has been shown in~\cite{CaHu18} that there exists a solution to the \gmfg equations. Concurrently, Theorem~\ref{thm:2} and Theorem~\ref{thm:2ih} show that all \gmfe can be found using backward recursion for the finite and infinite horizon problems. This proves that under (A1)-(A5), there exists a solution of~\eqref{eq:m_FP} and~\eqref{eq:m_FP_ih} at every $t$.
	\endproof

	\section{Methodology to compute graphon mean field team optimal policies }
		\label{sec:methodology2}
	In this section, we will provide a common agent based backward recursive dynamic programming methodology to compute optimal policies for both $\mathbb{T}_T$ and $\mathbb{T}_{\infty}$. As in Section~\ref{sec:methodology}, we will consider Markovian equilibrium strategies of player $\alpha$ which depend on the common information at time $t$, $\mu^G_{t}$, and on its current type $x_t^{\alpha}$. Equivalently, player $\alpha$ takes action of the form $A_t^{\alpha}\sim \sigma_t^{\alpha}(\cdot|\mu^G_t,x_t^{\alpha})$. As before, we first generate partial function $\gamma_t^{\alpha}:\cX\to\cP(\cA)$ as a function of $\mu^G_t$ through an equilibrium generating function $\theta_t^{\alpha}:\mu^G\to(\cX\to\cP(\cA))$ such that $\gamma_t^{\alpha} = \theta_t^{\alpha}[\mu^G_t]$. Then action $A_t^{\alpha}$ is generated by applying this prescription function $\gamma_t^{\alpha}$ on player $\alpha$'s current private information $x_t^{\alpha}$, i.e. $A_t^{\alpha}\sim \gamma_t^{\alpha}(\cdot|x_t^{\alpha})$. Thus $A_t^{\alpha}\sim \sigma_t^{\alpha}(\cdot|\mu^G_{t},x_t^{\alpha}) = \theta_t^{\alpha}[\mu^G_t](\cdot|x_t^{\alpha})$.
	
	%For a given prescription function $\gamma^{\alpha}_t = \theta^{\alpha}[\mu^G_t]$, the graphon mean-field $\mu^G_t$ evolves according to the discrete-time McKean Vlasov equation, $\forall y\in\cX$ and $\forall \alpha\in[0,1]$:
% 	\begin{align}
% 		\mu^{\alpha}_{t+1}(y) =\sum_{x\in\cX}\sum_{a\in \cA} \mu^{\alpha}_t(x)\gamma^{\alpha}_t(a|x)Q\left(y|x,a,\mu^G_t;g^{\alpha}\right), \label{eq:muG_update}
% 	\end{align}
% 	which implies
% 	\begin{align}
% 		\mu^{\alpha}_{t+1}&= \phi(\mu^{\alpha}_t,\gamma^{\alpha}_t,\mu^G_t;g^{\alpha})\\
% 		\mu_{t+1}^G &= \phi(\mu_t^G,\gamma_t;g^{\alpha}) 
% 	\end{align}

	\subsection{Backward recursive algorithm for $\mathbb{T}_{T}$} \label{sec:fhbr1}
		In this subsection, we will provide a dynamic programming methodology to generate team optimal strategies of $\mathbb{T}_{T}$ of the form described above. We define an optimal  generating function $(\theta_t)_{t\in[T]}$, where $\theta_t:\mu^G\to\{\cX\to\mathcal{P}(\cA) \}$, where for each $\mu^G_t $, we generate $\gamma_t^* = \theta_t[\mu^G_t]$. In addition, we generate a reward-to-go function $(V_t)_{t\in[T]}$, where $V_t:\mu^G_t\to\mathbb{R}$. These quantities are generated through a backward recursive optimization equation as follows.
		\begin{enumerate}
			\item Initialize $\forall \mu^G_{T+1}$,
			\begin{align}
				V_{T+1}(\mu^G_{T+1}) \defeq 0.   \label{eq:VT+1}
			\end{align}

			\item For $t = T,T-1, \ldots 1, \ \forall \mu^G_t$, let $\theta_t[\mu^G_t] $ be generated as follows. Set ${\gamma}^*_t = \theta_t[\mu^G_t]$, where ${\gamma}^*_t$ is the solution of the following optimization equation,
  			\begin{align}
				 {\gamma}^{*}_t \in  \arg\max_{\gamma_t} \E^{\gamma_t}\left[\sum_{\alpha \in[0,1]} \sum_{x_t^{\alpha}}\mu_t^{\alpha}(x_t^{\alpha}) R(x_t^{\alpha},A_t^{\alpha},\mu^G_t;g^{\alpha})+\delta V_{t+1}(\phi(\mu^G_t,\gamma_t;g^{\alpha})) | \mu^G_t\right] , \label{eq:m_FP2}
			\end{align}
 			where expectation in \eqref{eq:m_FP} is with respect to random variable $A_t^{\alpha}$ through the probability measure $\gamma^{\alpha}_t(a_t^{\alpha}|x_t^{\alpha})$. 
			Furthermore, using the quantity $\gamma^*_t$ found above, define
			\begin{align}
				V_{t}(\mu^G_t)\defeq\E^{\tilde{\gamma}^{\alpha}_t(\cdot|x^{\alpha})}\left[\sum_{\alpha \in[0,1]} \sum_{x_t^{\alpha}}\mu_t^{\alpha}(x_t^{\alpha})R(x_t^{\alpha},A_t^{\alpha},\mu^G_t;g^{\alpha})+\delta V_{t+1}(\phi(\mu^G_t,\gamma^*_t;g^{\alpha})) | \mu^G_t\right].  \label{eq:Vdef2}
			\end{align}
   		\end{enumerate}
		Then, the optimal Markovian strategy is defined as 
		\begin{align}
			{\sigma}^{*,\alpha}_t(a_t^{\alpha}|\mu^G_{1:t},x_{1:t}^{\alpha}) = {\gamma}^{*,\alpha}_t(a_t^{\alpha}|x_t^{\alpha}), \label{eq:sigma_fh}
		\end{align}
		where ${\gamma}^*_t = \theta[\mu^G_t]$. 

		In the following theorem, we show that the strategy thus constructed is an optimal Markovian strategy  of the team problem.

%%%%%%%%%%%%%
		\begin{theorem}
		\label{Thm:Main}
			A strategy $(\sigma^*)$ constructed from the above algorithm is an optimal Markovian strategy of the team problem i.e. $\forall t, \mu_{1:t}^G, {\sigma^{\alpha}}$,
			\begin{align}
				\label{eq:prop}
				\E^{\sigma^{*}}&\left[\sum_{n=t}^T \delta^{n-t}\sum_{\alpha \in[0,1]} \sum_{x_t^{\alpha}}\mu_t^{\alpha}(x_t^{\alpha}) R(x_n^{\alpha},A_n^{\alpha},\mu^G_n;g^{\alpha})|\mu^G_{1:t}\right] \geq \nn\\
				&\E^{{\sigma} }\left[\sum_{n=t}^T \delta^{n-t}\sum_{\alpha \in[0,1]} \sum_{x_t^{\alpha}}\mu_t^{\alpha}(x_t^{\alpha}) R(x_n^{\alpha},A_n^{\alpha},\mu^G_n;g^{\alpha})|\mu^G_{1:t}\right] 
			\end{align}
		\end{theorem}
		\begin{proof}
		It is easy to see that $\{\mu^G_t,\gamma_t\}_t$ is a controlled Markov process for this problem since $\mu^G_{t+1} = \phi(\mu^G_t,\gamma_t)$ and the current rewards can be written as a function of $\mu^G_t,\gamma_t$.
		Thus the result is a standard application Markov decision theory~\cite{KuVa86}.
		\end{proof}

	\
	\subsection{Backward recursive algorithm for $\mathbb{T}_{\infty}$} \label{sec:fhbr}
		In this section, we consider the infinite-horizon problem $\mathbb{T}_{\infty}$, for which we assume the reward function $R$ to be absolutely bounded.

		We define an optimal generating function $\theta:\mu^G\to\{\cX\to\mathcal{P}(\cA) \}$, where for each $\mu^G_t $, we generate $\gamma^*_t = \theta[\mu^G_t]$. In addition, we generate a reward-to-go function $V:\mu^G\to\mathbb{R}$.
		These quantities are generated through a fixed-point equation as follows.

		For all $\mu^G,$ set ${\gamma}^* = \theta[\mu^G]$. Then $({\gamma}^*,V)$ are solution of the following fixed-point equation, $\forall \mu^G$,
		\begin{align}
			\label{eq:m_FP_ih2}
			{\gamma}^{*} &\in \arg\max_{\gamma}\ \E^{\gamma} \left[\sum_{\alpha \in[0,1]} \sum_{x^{\alpha}}\mu_t^{\alpha}(x^{\alpha})R(x^{\alpha},A^{\alpha},\mu^G;g^{\alpha})+\delta V(\phi(\mu^G,\gamma^*;g^{\alpha}) | \mu^G\right] ,\\
			V(\mu^G) &=\  \E^{\tilde{\gamma}(\cdot|x^{\alpha})} \left[ \sum_{\alpha \in[0,1]} \sum_{x_t^{\alpha}}\mu_t^{\alpha}(x_t^{\alpha}) R(x^{\alpha},A^{\alpha},\mu^G;g^{\alpha})+\delta V (\phi(\mu^G,\gamma^*;g^{\alpha})) | \mu^G\right]. 
		\end{align}
 		where expectation in \eqref{eq:m_FP_ih} is with respect to random variable $(A^{\alpha})$ through the measure $\gamma(a^{\alpha}|x^{\alpha})$.

		Then the optimal Markovian strategy is defined as 
		\begin{align}
			{\sigma}_t^{*,\alpha}(a_t^{\alpha}|\mu^G_{1:t},x_{1:t}^{\alpha}) = {\gamma}_t^{*,\alpha}(a_t^{\alpha}|x_t^{\alpha}), \label{eq:sigma_ih}
		\end{align}
		where ${\gamma}_t^{*} = \phi[\mu^G_t]$.

		The following theorem shows that the strategy thus constructed is an optimal Markovian policy of the team problem.

		\begin{theorem}
		\label{thih}
			A strategy $(\sigma^*)$ constructed from the above algorithm is an optimal Markovian policy of the team problem i.e. $\forall t, \mu_{1:t}^G, {\sigma^{\alpha}}$,

			\begin{align}
				\label{eq:prop_ih2}
				\E^{\sigma^{*} }&\left[\sum_{n=t}^\infty \delta^{n-t}\sum_{\alpha \in[0,1]} \sum_{x_t^{\alpha}}\mu_t^{\alpha}(x_t^{\alpha}) R(x_n^{\alpha},A_n^{\alpha},\mu^G_n;g^{\alpha})|\mu^G_{1:t}\right] \geq\nn\\
				&\E^{{\sigma}}\left[\sum_{n=t}^\infty \delta^{n-t} \sum_{\alpha \in[0,1]} \sum_{x_t^{\alpha}}\mu_t^{\alpha}(x_t^{\alpha})R(x_n^{\alpha},A_n^{\alpha},\mu^G_n;g^{\alpha})|\mu^G_{1:t}\right], 
			\end{align} 
		\end{theorem}
		\begin{proof}
		By same argument as proof of Theorem~6, since $\{\mu^G_t,\gamma_t\}_t$ is a controlled Markov process for this problem as $\mu^G_{t+1} = \phi(\mu^G_t,\gamma_t)$ and the current rewards can be written as a function of $\mu^G_t,\gamma_t$. Also $R$ is absolutely bounded. Therefore, the result is a standard application Markov decision theory~\cite{KuVa86}.
		\end{proof}

\section{Numerical Example}
	\label{sec:example}
In this section, we put forth a numerical example to showcase the proposed sequential decomposition in the context of a system where the relative position of the players with respect to other players in a graph affects the state of the player as well as their equilibrium strategies. We provide the following definition.
\begin{definition}
	Players $\alpha$ and $\beta$ are statistically equivalent if $\phi^{\alpha}(\mu,\gamma,\mu^G;g^{\alpha}) = \phi^{\beta}(\mu,\gamma,\mu^G;g^{\beta})$.
\end{definition}

\begin{proposition}
	\label{prop:stat_equi}
	Mean field games with statistically equivalent players share the same mean field distribution and $\mu^G$ can be replaced by $\mu^\alpha$ for all $\alpha\in[0,1]$.
\end{proposition}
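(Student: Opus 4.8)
The plan is to show that statistical equivalence, as defined, forces the per-player mean-field update map to be independent of the player label, and then to invoke the McKean--Vlasov recursion~\eqref{eq:muG_update} together with a common initialization to propagate this equality forward in time. Concretely, I would first unpack the definition: two players $\alpha$ and $\beta$ are statistically equivalent precisely when the update operator $\phi(\cdot,\cdot,\cdot;g^{\alpha})$ built from the kernel $Q^{\alpha}(\cdot|x,a,\mu^G;g^{\alpha})$ coincides with $\phi(\cdot,\cdot,\cdot;g^{\beta})$ built from $Q^{\beta}$. Since the only way the player label enters the dynamics~\eqref{eqn:markov}--\eqref{eq:func2} is through $g^{\alpha}=\{g(\alpha,\beta):\beta\in[0,1]\}$ inside $\tilde f$ and hence inside $Q^{\alpha}$, statistical equivalence of all players in $[0,1]$ means there is a single map $\phi(\cdot,\cdot,\cdot)$ such that $\mu^{\alpha}_{t+1} = \phi(\mu^{\alpha}_t,\gamma^{\alpha}_t,\mu^G_t)$ for every $\alpha$, with the same $\gamma_t$ prescription (the players are homogeneous, so $\gamma^{\alpha}_t=\gamma_t=\theta_t[\mu^G_t]$ does not depend on $\alpha$ either).

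The second step is an induction on $t$. At $t=1$ the population is initialized with a common type distribution, so $\mu^{\alpha}_1 = \mu^{\beta}_1 =: \mu_1$ for all $\alpha,\beta$; hence $\mu^G_1 = \{\mu_1\}_{\alpha}$ is the constant ensemble, which I identify with the single measure $\mu_1$. For the inductive step, assume $\mu^{\alpha}_t = \mu_t$ for all $\alpha$, so that $\mu^G_t$ is the constant ensemble $\mu_t$. Then applying~\eqref{eq:muG_update}, and using that $Q^{\alpha}$, $\gamma_t$, and the argument $\mu_t$ are all label-free, gives
\begin{align}
\mu^{\alpha}_{t+1}(y) = \sum_{x\in\cX}\sum_{a\in\cA}\mu_t(x)\,\gamma_t(a|x)\,Q(y|x,a,\mu_t) = \mu^{\beta}_{t+1}(y)
\end{align}
for all $\alpha,\beta$ and all $y$, completing the induction. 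Consequently at every $t$ the ensemble $\mu^G_t$ is constant in $\alpha$ and is fully described by the single measure $\mu^{\alpha}_t = \mu_t$, which is exactly the claim that $\mu^G$ can be replaced by $\mu^{\alpha}$.

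I would then note the consequence that under this reduction the fixed-point equation~\eqref{eq:m_FP} (or~\eqref{eq:m_FP_ih} in the infinite-horizon case) becomes a recursion in the scalar-measure variable $\mu_t$ rather than the full ensemble, which is the practically useful statement. The main obstacle — really the only subtlety — is justifying the base case and the ``homogeneity'' claim cleanly: one must be explicit that all players draw their initial type from the same distribution and that $\theta_t$ produces a label-independent prescription, since if either the initial condition or the reward/kernel structure broke symmetry, statistical equivalence of the $\phi$ maps alone would not force $\mu^{\alpha}_t = \mu^{\beta}_t$ at later times. Everything else is a one-line substitution into~\eqref{eq:muG_update}. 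If one wants to be careful about the degenerate diagonal $\alpha=\beta$ in~\eqref{eq:func2}, I would remark (as the paper already does after~\eqref{eq:func2}) that this is a measure-zero set in the integral over $\beta$ and therefore does not affect $\phi$.
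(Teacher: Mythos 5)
Your argument is correct, but note that the paper itself offers no proof of Proposition~\ref{prop:stat_equi} --- it is asserted and then immediately applied to the complete, Erd\H{o}s--R\'enyi, stochastic block model and random geometric graphons --- so there is nothing in the text to compare against; your induction is the natural argument that fills this gap. Your route is: statistical equivalence forces a label-free update map, homogeneity gives a label-free prescription $\gamma_t=\theta_t[\mu^G_t]$ and a common initial law $\mu^\alpha_1=\mu_1$, and then a one-line substitution into \eqref{eq:muG_update} propagates $\mu^\alpha_t=\mu^\beta_t$ forward in time, so the ensemble $\mu^G_t$ collapses to a single measure. Two remarks strengthen it. First, the paper's definition only equates the maps $\phi(\cdot,\cdot,\cdot;g^\alpha)$ and $\phi(\cdot,\cdot,\cdot;g^\beta)$; your reading that this pins down the one-step structure can be made rigorous by testing $\phi$ on point-mass $\mu$ and degenerate $\gamma$, which recovers $Q(\cdot|x,a,\mu^G;g^\alpha)=Q(\cdot|x,a,\mu^G;g^\beta)$ for all $(x,a,\mu^G)$, so the first step is not merely an appeal to how $g^\alpha$ ``enters'' the dynamics. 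Second, you are right to flag that equality of the $\phi$ maps alone does not yield the conclusion: you additionally need the common initial distribution and a label-independent prescription, and the latter implicitly requires the reward $R(x,a,\mu^G;g^\alpha)$ to be symmetric across statistically equivalent players (otherwise the fixed point \eqref{eq:m_FP} could produce $\alpha$-dependent $\tilde\gamma_t$ even with identical kernels). These hypotheses are consistent with the paper's standing assumption of homogeneous players, but since the proposition as stated does not list them, spelling them out as you do is the right thing, and with them your induction is complete.
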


For a complete, Erdos R\'enyi, symmetric stochastic block model, and random geometric graphon, every player is statistically equivalent. Thus from proposition~\ref{prop:stat_equi}, the players share the same \mkv mean field evolution function and so the same mean field.

%\end{fact}
Let $n$ be the total number of statistically different players. Then $\mu^G$ can be replaced by $\{\mu\}_{i=1,\ldots, n}$. With the proposition we can represent the graphon mean field population state as
\begin{align}
	\mu^G &= \{\mu\}_{i=1,\ldots, n}=\mu\ \forall i
\end{align} 
\subsection{Cybersecurity Example}
%\begin{fact}

	We consider a cyber-security example where a cluster of nodes, facing a possible malware attack in a network, do a cost-benefit analysis to determine whether to opt for repairing. The results of this analysis, however, could be extended to many different cases like the vaccination in a population, entry and exit of firms, financial markets, demand response in smart-grid  and so on. The dynamics of each of the node is affected by the action of the neighboring nodes connected with different measures captured in a network graph and represented as a graphon function $G$. In this example, we assume different graphon functions and obtain the optimal policies using our sequential decomposition algorithm assuming that the graphs are symmetric with respect to the participating agents. In the model, the node can have two states $x^\alpha\in\cX=\{0,1\}$ representing healthy and infected node respectively. Similarly, there are two actions at their disposal for each of the state $a^\alpha\in\cA=\{0,1\}$ which says whether the nodes gets repaired with a cost or takes the risk by not undergoing repair. The chances of a node getting affected by a malware attack depends on the population as well as the state of the neighboring nodes according to the graphon. The dynamics of the model are given as
	\begin{align}
		x_{t+1}^\alpha = \lb {x_t^\alpha+\left(1-x_t^\alpha\right)w_t^\alpha &\text{   if } a_t^\alpha=0 \\
		0 &\text{ otherwise }}
	\end{align}
	where $w_t^\alpha\in\{0,1\}$ is a binary random variable with 
	\begin{align}
		\mP\{w_t^\alpha=1\}= \int_{\beta\in[0,1]}\sum_{x^\beta\in\cX}&f\left(x^\alpha_t,a^\alpha_t,x^\beta\right)g\left(\alpha,\beta\right)\mu_t^\beta(x^\beta)d(x^\beta)d\beta 
	\end{align}
	It is assumed that the value of $\mP\{w_t^\alpha=1\}$ is $q$ when the graph is fully connected i.e. $g\left(\alpha,\eta\right)=1$ and the mean state of the neighbors $\mu_t(x^\beta) = 1$. The value of $q$ is assumed to be $0.9$ for our game and $0.4$ for the team simulations. The reward function is given as
	\begin{align}
		r\left(x_t^\alpha,a_t^\alpha,\mu_t\right)=-kx_t^\alpha-\lambda a_t^\alpha
	\end{align} 
	The value $k$ represents the penalty if the node gets infected and $\lambda$ represents the cost of repair. The values $k$ and $\lambda$ are assumed as $0.3$ and $0.2$ respectively for our simulation. 
	Here we implement our algorithm to derive equilibrium for this problem by considering three popular network models to capture the interaction between the population. 
	We consider the following graphons:
	\begin{enumerate}
				\item \textit{Fully Connected Graph}: The graphon function is given as
		\begin{align}
		g(\alpha,\beta)  = 1\ \forall \alpha,\beta
		\end{align}
		
		\item \textit{Erd\"os Renyi Graph}: The graphon function is given as
			\begin{align}
				g(\alpha,\beta)  = p\ \forall \alpha,\beta
			\end{align}
We assume a value $p=0.8$ for our simulation. 
		\item \textit{Stochastic Block Model}: The graphon function is given as 
			\begin{align}
				g(\alpha,\beta) = \lb {p \text{ if } \alpha,\beta\leq 0.5 \text{ or } \alpha,\beta\geq 0.5\\
				q \text{ otherwise }}
			\end{align}	
Here, $p$ represents the intra-community interaction and is assumed as $p=.9$ for our simulation. Similarly, $q=.4$ represents the inter-community interaction parameter. 	
		\item \textit{Random Geometric graph}: The graphon function is given as
		\begin{align}
			g(\alpha,\beta) = f(\min (\beta-\alpha,1-\beta+\alpha))
		\end{align}
		where $f:[0,.5]\to [0,1]$ is a non-increasing function, and in our simulation we assume it to be $f(x) = e^{\frac{x}{0.5-x}}$.
	\end{enumerate}

%\begin{figure}[!htb]
%	\centering
%	\includegraphics[width=.6\textwidth]{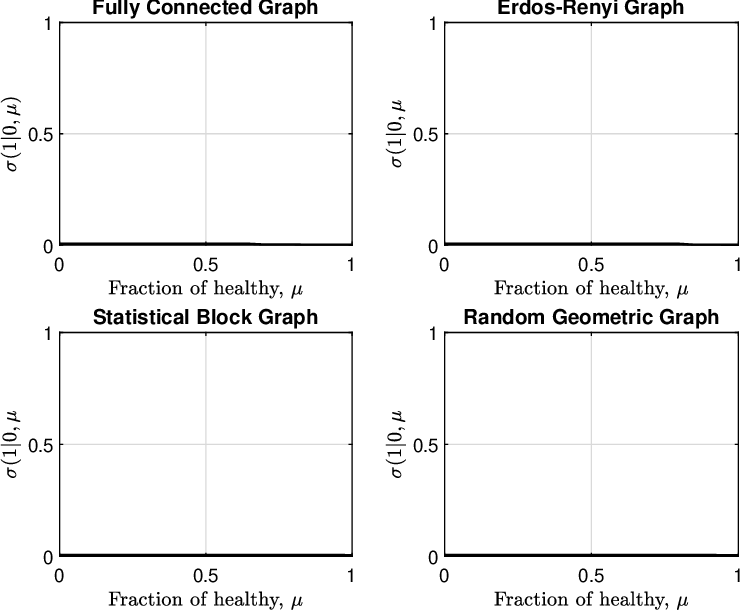}
%	\caption{Policy action at lower state for all graphons}
%\end{figure}

\begin{figure}[!htb]
\begin{center}
	\label{fig:policy}
	\centering
	\includegraphics[width=.6\textwidth]{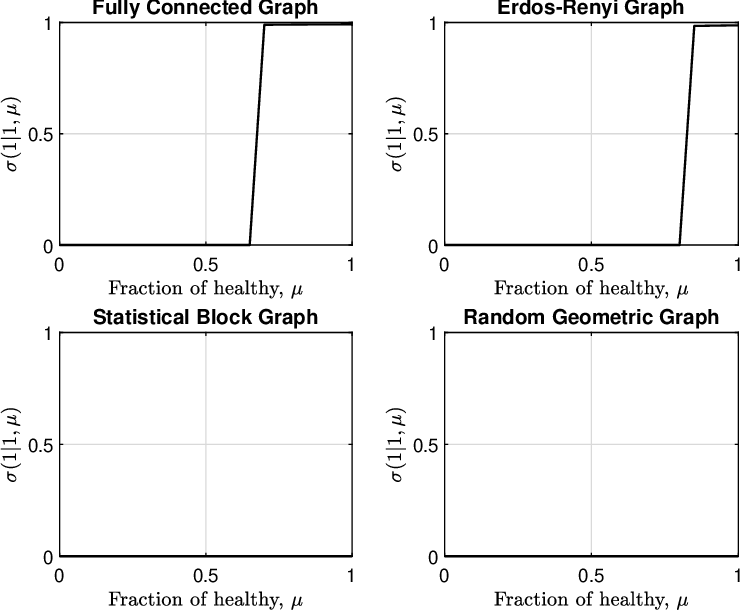}
	\caption{Policy action at higher state for all graphons for the game}
	\end{center}
\end{figure}

%\begin{figure}[!htb]
%\centering
%\includegraphics[width=.6\textwidth]{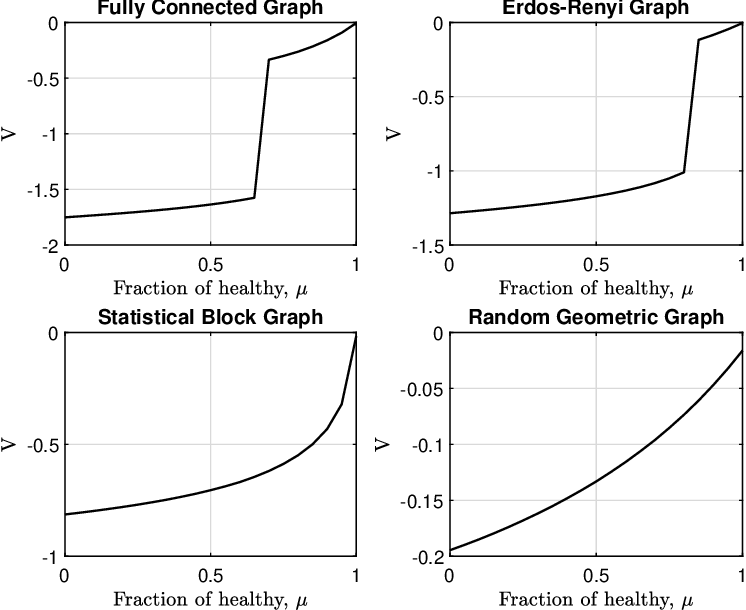}
%\caption{Value function at lower state for all graphons}
%\end{figure}
%
%\begin{figure}[!htb]
%	\centering
%	\includegraphics[width=.6\textwidth]{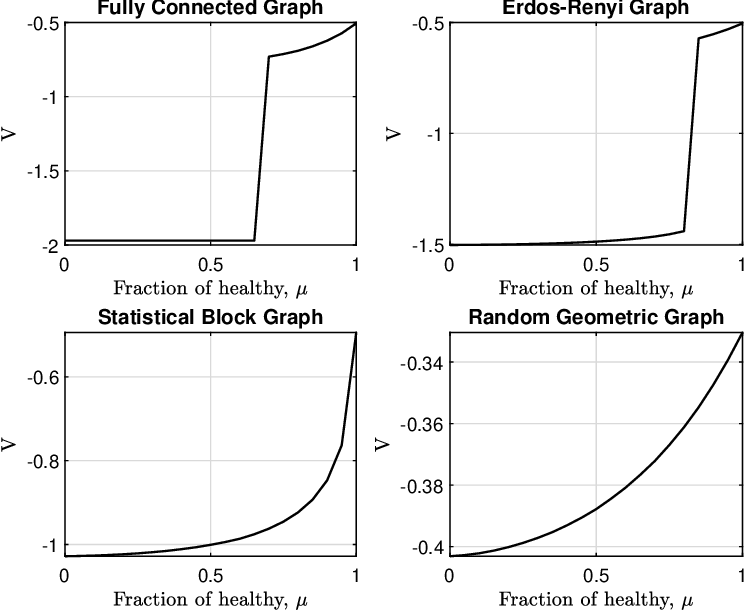}
%	\caption{Value function at higher state for all graphons}
%\end{figure}

\begin{figure}[!htb]
	\label{fig:Phi}
	\centering
	\includegraphics[width=.6\textwidth]{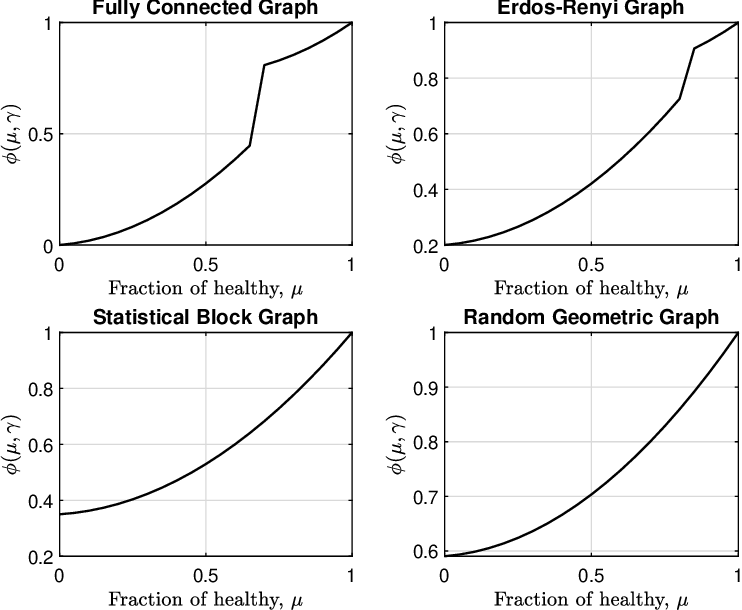}
	\centering
	\caption{Mean Field Evolution at different mean fields for the game problem}
\end{figure}

\begin{figure}[!htb]
	\label{fig:mfe}
	\centering
	\includegraphics[width=.6\textwidth]{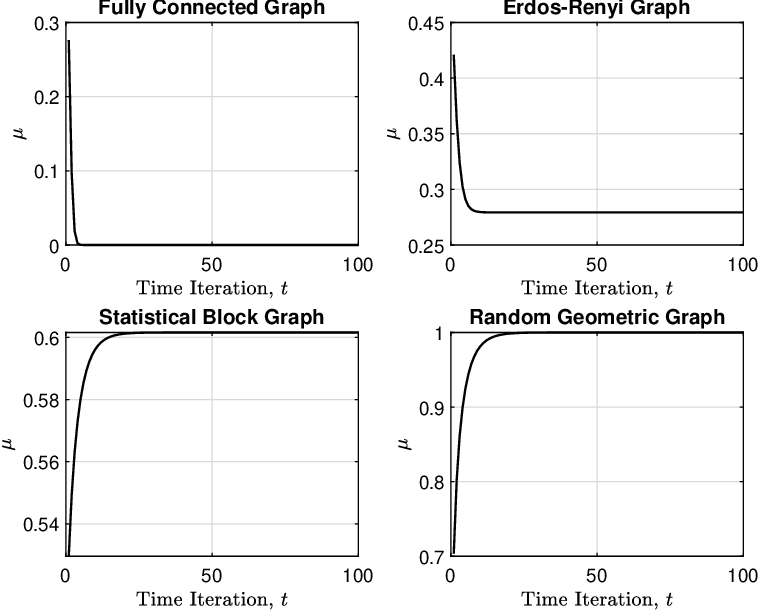}
	\caption{Convergence to \glspl{gmfg} for all graphons with time for the game problem}
\end{figure}

Figure~\ref{fig:policy} shows the equilibrium policy derived for different graphons for the specific cyber-security example. The policies differ as the interaction of the agents with their neighbors influences their strategies. Figure~\ref{fig:Phi} gives the relation between $\mu_t$ and $\mu_{t+1}$ as presented in the \eqref{eq:muG_update}. Figure~\ref{fig:mfe} shows the equilibrium mean field or in the specific case that we consider when with time, the a mean field distribution of $0.5$ approaches different mean field states for different graphons but with the same state dynamics. In Figures~\ref{fig:policy1},~\ref{fig:Phi1},~\ref{fig:mfe1}, we plot the policies and mean field equilibrium or different graphons for the specific cyber-security example when the agents cooperate as a team.

\begin{figure}[!htb]
\begin{center}
	\label{fig:policy1}
	\centering
	\includegraphics[width=.6\textwidth]{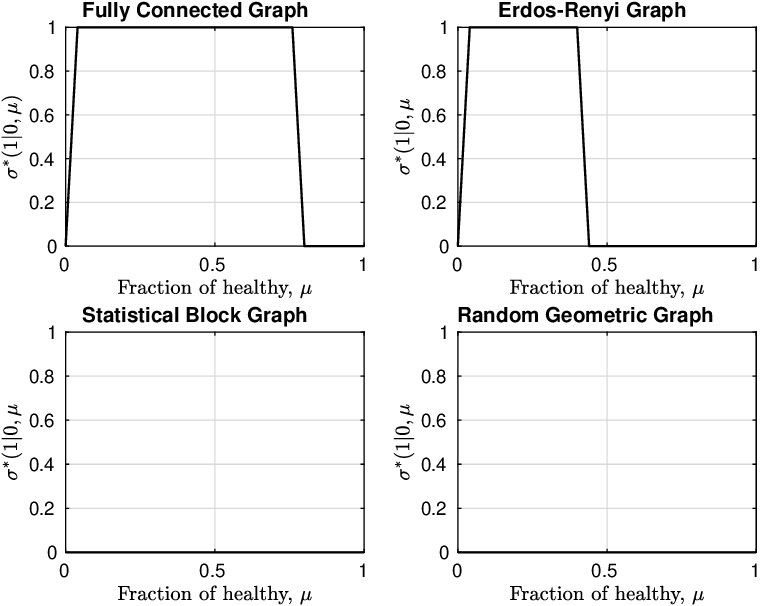}
	\caption{Policy action at higher state for all graphons for the team problem}
	\end{center}
\end{figure}
\begin{figure}[!htb]
	\label{fig:Phi1}
	\centering
	\includegraphics[width=.6\textwidth]{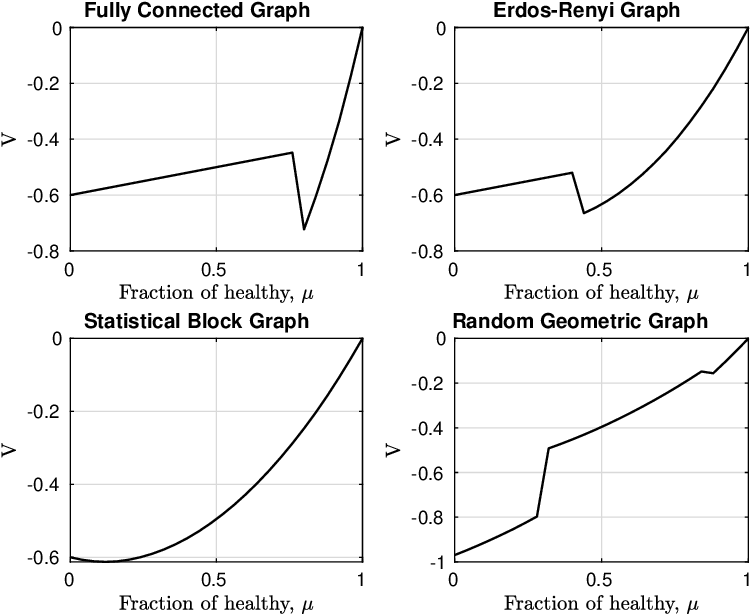}
	\centering
	\caption{Mean Field Evolution at different mean fields for the team problem}
\end{figure}

\begin{figure}[!htb]
	\label{fig:mfe1}
	\centering
	\includegraphics[width=.6\textwidth]{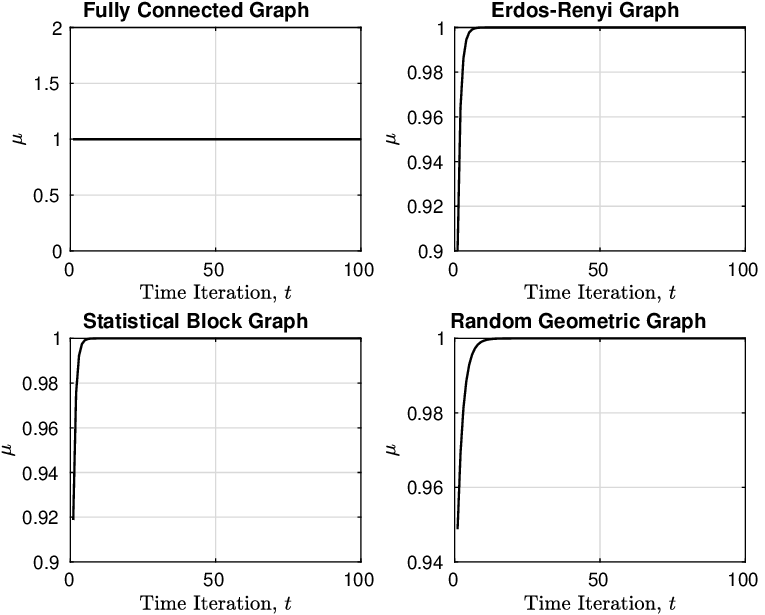}
	\caption{Convergence to \glspl{gmfg} for all graphons with time for the team problem}
\end{figure}
\section{Conclusion}
	\label{sec:Concl}
In this paper, we consider both finite and infinite horizon, large population dynamic game (with individual rewards) and team(with common rewards) where each player is affected by others through a graphon mean-field population state. We present a novel backward recursive algorithm to compute non-stationary, signaling \gmfg and GOMP for such games, where each player's strategy depends on its current private type and the current graphon mean-field population state. The non-triviality in the problem is that the update of population state is coupled to the strategies of the game, and is managed in the algorithm through unique construction of the fixed-point equations~\eqref{eq:m_FP},\eqref{eq:m_FP_ih} for GMFE and through an optimization problem~\eqref{eq:m_FP2} for the team problem. We proved the existence of the fixed-point equations~\eqref{eq:m_FP} under certain conditions. Using this algorithm, we considered a malware propagation problem where we numerically computed equilibrium and team optimal strategies of the players. In general, this algorithm be could instrumental in studying non-stationary equilibria and optimal control in a number of applications such as financial markets, social learning, renewable energy and more.

\appendices	
%\begin{APPENDICES}
    \section{}
   \label{app:A}
   \proof
   We prove~\eqref{eq:prop} using induction and the results in Lemma~\ref{lemma:2}, and \ref{lemma:1} proved in Appendix~\ref{app:B}.
   \seq{
   For base case at $t=T$, $\forall (\mu^G_{1:T}, x_{1:T}^{\alpha})\in \mathcal{H}_{T}^{\alpha}, \sigma^{\alpha}$
   \eq{
   \E^{\tsigma_{T}^{\alpha} \tsigma_{T}^{-\alpha} }\left\{  R(X_t^{\alpha},A_t^{\alpha},\mu^G_t;g^{\alpha}) \big\lvert \mu^G_{1:T}, x_{1:T}^{\alpha} \right\}
   &=
   V_T^{\alpha}(\mu^G_T, x_T^{\alpha})  \label{eq:T2a}\\
   &\geq \E^{\sigma_{T}^{\alpha} \tsigma_{T}^{-\alpha}} \left\{ R(X_t^{\alpha},A_t^{\alpha},\mu^G_t;g^{\alpha}) \big\lvert \mu^G_{1:T}, x_{1:T}^{\alpha} \right\},  \label{eq:T2}
   }
   }
   where \eqref{eq:T2a} follows from Lemma~\ref{lemma:1} and \eqref{eq:T2} follows from Lemma~\ref{lemma:2} in Appendix~\ref{app:B}.
   
   Let the induction hypothesis be that for $t+1$, $\forall i\in [N], \mu^G_{1:t+1} \in (\mathcal{H}_{t+1}^c), x_{1:t+1}^{\alpha} \in (\cX)^{t+1}, \sigma^{\alpha}$,
   \seq{
   \eq{
    \E^{\tsigma_{t+1:T}^{\alpha} \tsigma_{t+1:T}^{-\alpha}} \left\{ \sum_{n=t+1}^T \delta^{n-t-1}R(X_n^{\alpha},A_n^{\alpha},\mu^G_n;g^{\alpha}) \big\lvert \mu^G_{1:t+1}, x_{1:t+1}^{\alpha} \right\} \\
    \geq%\nn\\
     \E^{\sigma_{t+1:T}^{\alpha} \tsigma_{t+1:T}^{-\alpha}} \left\{ \sum_{n=t+1}^T \delta^{n-t-1} R(X_n^{\alpha},A_n^{\alpha},\mu^G_n;g^{\alpha}) \big\lvert  \mu^G_{1:t+1}, x_{1:t+1}^{\alpha} \right\}. \label{eq:PropIndHyp}
   }
   }
   \seq{
   Then $\forall i\in [N], (\mu^G_{1:t}, x_{1:t}^{\alpha}) \in \mathcal{H}_{t}^{\alpha}, \sigma^{\alpha}$, we have
   \eq{
   &\E^{\tsigma_{t:T}^{\alpha} \tsigma_{t:T}^{-\alpha}} \left\{ \sum_{n=t}^T \delta^{n-t-1}R(X_n^{\alpha},A_n^{\alpha},\mu^G_n;g^{\alpha}) \big\lvert \mu^G_{1:t}, x_{1:t}^{\alpha} \right\} \nonumber \\
   &= V_t(\mu^G_{t}, x_t^{\alpha})\label{eq:T1}\\
   &\geq \E^{\sigma_t^{\alpha} \tsigma_t^{-\alpha}} \left\{ R(X_t^{\alpha},A_t^{\alpha},\mu^G_t;g^{\alpha}) + \delta V_{t+1}^{\alpha} (\mu^G_{t+1}, X_{t+1}^{\alpha}) \big\lvert \mu^G_{1:t}, x_{1:t}^{\alpha} \right\}  \label{eq:T3}\\
   &= \E^{\sigma_t^{\alpha} \tsigma_t^{-\alpha}} \left\{ R(X_t^{\alpha},A_t^{\alpha},\mu^G_t;g^{\alpha}) + \right.\nn\\
   &\left.\delta \E^{\tsigma_{t+1:T}^{\alpha} \tsigma_{t+1:T}^{-\alpha}} \left\{ \sum_{n=t+1}^T \delta^{n-t-1}R(X_n^{\alpha},A_n^{\alpha},\mu^G_n;g^{\alpha}) \big\lvert  \mu^G_{1:t},\mu^G_{t+1}, x_{1:t}^{\alpha},X_{t+1}^{\alpha} \right\}  \big\vert \mu^G_{1:t}, x_{1:t}^{\alpha} \right\}  \label{eq:T3b}\\
   &\geq \E^{\sigma_t^{\alpha} \tsigma_t^{-\alpha}} \left\{ R(X_t^{\alpha},A_t^{\alpha},\mu^G_t;g^{\alpha}) + \right.\nn\\
   &\left.\delta\E^{\sigma_{t+1:T}^{\alpha} \tsigma_{t+1:T}^{-\alpha} } \left\{ \sum_{n=t+1}^T \delta^{n-t-1}R(X_n^{\alpha},A_n^{\alpha},\mu^G_n;g^{\alpha}) \big\lvert \mu^G_{1:t},\mu^G_{t+1}, x_{1:t}^{\alpha},X_{t+1}^{\alpha}\right\} \big\vert \mu^G_{1:t}, x_{1:t}^{\alpha} \right\}  \label{eq:T4} \\
   &= \E^{\sigma_t^{\alpha} \tsigma_t^{-\alpha}} \big\{ R(X_t^{\alpha},A_t^{\alpha},\mu^G_t;g^{\alpha}) + \nn\\
   &\delta\E^{\sigma_{t:T}^{\alpha} \tsigma_{t:T}^{-\alpha} } 
   \left\{ \sum_{n=t+1}^T \delta^{n-t-1}R(X_n^{\alpha},A_n^{\alpha},\mu^G_n;g^{\alpha}) \big\lvert \mu^G_{1:t},\mu^G_{t+1}, x_{1:t}^{\alpha},X_{t+1}^{\alpha}\right\} \big\vert \mu^G_{1:t}, x_{1:t}^{\alpha} \big\}  
   \label{eq:T5}\\
   &=\E^{\sigma_{t:T}^{\alpha} \tsigma_{t:T}^{-\alpha}} \left\{ \sum_{n=t}^T \delta^{n-t}R(X_n^{\alpha},A_n^{\alpha},\mu^G_n;g^{\alpha}) \big\lvert \mu^G_{1:t},  x_{1:t}^{\alpha} \right\}  \label{eq:T6},
   }
   }
   where \eqref{eq:T1} follows from Lemma~\ref{lemma:1}, \eqref{eq:T3} follows from Lemma~\ref{lemma:2}, \eqref{eq:T3b} follows from Lemma~\ref{lemma:1}, \eqref{eq:T4} follows from induction hypothesis in \eqref{eq:PropIndHyp} and \eqref{eq:T5} follows since the random variables involved in the right conditional expectation do not depend on strategies $\sigma_t^{\alpha}$.
   \endproof
   
   \section{}
   \label{app:B}
   \begin{lemma}
   \label{lemma:2}
   $\forall t\in [T], i\in [N], (\mu^G_{1:t}, x_{1:t}^{\alpha})\in \mathcal{H}_t^{\alpha}, \sigma^{\alpha}_t$
   \eq{
   V_t^{\alpha}(\mu^G_t, x_t^{\alpha}) \geq \E^{\sigma_t^{\alpha} \tsigma_t^{-\alpha}} \left\{ R(X_t^{\alpha},A_t^{\alpha},\mu^G_t;g^{\alpha}) + \delta V^{\alpha}_{t+1} (\mu^G_{t+1}, X_{t+1}^{\alpha}) \big\lvert  \mu^G_{1:t}, x_{1:t}^{\alpha} \right\}.\label{eq:lemma2}
   }
   \end{lemma}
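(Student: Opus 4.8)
\textbf{Proof plan for Lemma~\ref{lemma:2}.}
The plan is to derive \eqref{eq:lemma2} directly from the two defining properties of the pair $(\tgamma_t, V_t)$ produced by the backward recursion, using the fact that a single player has Lebesgue measure zero in $[0,1]$ and therefore cannot perturb the graphon mean-field trajectory. First I would recall that, by \eqref{eq:Vdef}, $V_t(\mu^G_t, x_t^{\alpha})$ is exactly the conditional expectation of $R(X_t^{\alpha},A_t^{\alpha},\mu^G_t;g^{\alpha}) + \delta V_{t+1}^{\alpha}(\phi(\mu^G_t,\tgamma_t), X_{t+1}^{\alpha})$ when $A_t^{\alpha}\sim\tgamma_t(\cdot|x_t^{\alpha})$ and $X_{t+1}^{\alpha}\sim Q^{\alpha}(\cdot|x_t^{\alpha},A_t^{\alpha},\mu^G_t;g^{\alpha})$, and that, by the fixed-point equation \eqref{eq:m_FP}, the prescription $\tgamma_t(\cdot|x_t^{\alpha})$ maximizes precisely this expression over all $\gamma_t(\cdot|x_t^{\alpha})\in\cP(\cA)$ — with the continuation argument $\phi(\mu^G_t,\tgamma_t)$ \emph{held fixed} at the equilibrium prescription rather than varying with $\gamma_t$. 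Hence, for every $\gamma_t(\cdot|x_t^{\alpha})$,
\begin{align}
V_t(\mu^G_t, x_t^{\alpha}) \;\geq\; \E^{\gamma_t(\cdot|x_t^{\alpha})}\!\left[ R(X_t^{\alpha},A_t^{\alpha},\mu^G_t;g^{\alpha}) + \delta V_{t+1}^{\alpha}\!\left(\phi(\mu^G_t,\tgamma_t), X_{t+1}^{\alpha}\right) \,\big|\, \mu^G_t, x_t^{\alpha} \right]. \nonumber
\end{align}

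Next I would connect the right-hand side of \eqref{eq:lemma2} to this bound. Given an arbitrary, possibly history-dependent, deviation $\sigma_t^{\alpha}$ and the conditioning history $(\mu^G_{1:t},x_{1:t}^{\alpha})$, define the induced prescription $\gamma_t(\cdot|x_t^{\alpha}) := \sigma_t^{\alpha}(\cdot|\mu^G_{1:t},x_{1:t}^{\alpha})$, which is a well-defined element of $\cP(\cA)$ once we condition on the history. Under the profile $(\sigma_t^{\alpha},\tsigma_t^{-\alpha})$ and conditioned on $(\mu^G_{1:t},x_{1:t}^{\alpha})$: (i) $A_t^{\alpha}$ has law $\gamma_t(\cdot|x_t^{\alpha})$; (ii) $X_{t+1}^{\alpha}\sim Q^{\alpha}(\cdot|x_t^{\alpha},A_t^{\alpha},\mu^G_t;g^{\alpha})$; and (iii) the next population state is deterministic with $\mu^G_{t+1} = \phi(\mu^G_t,\tgamma_t)$. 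Point (iii) is the crux: the McKean–Vlasov update \eqref{eq:muG_update} aggregates over the continuum $\beta\in[0,1]$ of players, all of whom (bar the measure-zero player $\alpha$) still follow the equilibrium prescription $\tgamma_t$, so player $\alpha$'s unilateral change leaves $\phi$ unchanged. Consequently the conditional expectation on the right of \eqref{eq:lemma2} equals $\E^{\gamma_t(\cdot|x_t^{\alpha})}[R(X_t^{\alpha},A_t^{\alpha},\mu^G_t;g^{\alpha}) + \delta V_{t+1}^{\alpha}(\phi(\mu^G_t,\tgamma_t),X_{t+1}^{\alpha})\,|\,\mu^G_t,x_t^{\alpha}]$, which is dominated by $V_t(\mu^G_t,x_t^{\alpha})$ via the displayed inequality; this is exactly \eqref{eq:lemma2}.

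The main obstacle I anticipate is the rigorous justification of point (iii): one must argue that the equilibrium-generated population trajectory $\mu^G_{1:T}$ is genuinely insensitive to player $\alpha$'s deviation, so that $V_{t+1}^{\alpha}$ is evaluated at the identical argument $\phi(\mu^G_t,\tgamma_t)$ on both sides of the inequality. This is precisely the feature that the authors flag after \eqref{eq:m_FP}: $\tgamma_t$ appears both as the maximizing prescription and inside the mean-field update, and only the latter occurrence is ``frozen'' under a single-agent deviation, which is why \eqref{eq:m_FP} is not an ordinary Bayesian–Nash fixed point. The remaining details — measurability of $\gamma_t$ as a function of $x_t^{\alpha}$, use of the tower property to isolate the time-$t$ term, and matching the conditioning $\sigma$-algebras $\mathcal{H}_t^{\alpha}$ against the Markovian conditioning on $(\mu^G_t,x_t^{\alpha})$ — are routine and I would treat them briefly.
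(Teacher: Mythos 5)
Your proposal is correct and follows essentially the same route as the paper's proof: both rest on the maximization property of $\tgamma_t$ in \eqref{eq:m_FP} together with the definition \eqref{eq:Vdef}, the construction of an induced prescription $\gamma_t(\cdot|x_t^{\alpha})=\sigma_t^{\alpha}(\cdot|\mu^G_{1:t},x_{1:t}^{\alpha})$ from the history-dependent deviation, and the fact that the continuation mean field stays frozen at $\phi(\mu^G_t,\tgamma_t;g^{\alpha})$ under a unilateral (measure-zero) deviation. The only difference is presentational — the paper phrases the argument as a proof by contradiction while you argue the inequality directly — which does not change the substance.
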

   
   \proof
   We prove this lemma by contradiction.
   
    Suppose the claim is not true for $t$. This implies $\exists i, \hat{\sigma}_t^{\alpha}, \hat{\mu}^G_{1:t}, \hat{x}_{1:t}^{\alpha}$ such that
   \eq{
   \E^{\hat{\sigma}_t^{\alpha} \tsigma_t^{-\alpha}} \left\{ R(X_t^{\alpha},A_t^{\alpha},\mu^G_t;g^{\alpha}) +  \delta V^{\alpha}_{t+1} (\mu^G_{t+1}, X_{t+1}^{\alpha}) \big\lvert \hat{\mu}^G_{1:t},\hat{x}_{1:t}^{\alpha} \right\} 
   > V_t(\hmu^G_t, \hat{x}_{t}^{\alpha}).\label{eq:E8}
   }
   We will show that this leads to a contradiction.
   Construct 
   \begin{equation}
   \hat{\gamma}^{\alpha}_t(a_t^{\alpha}|x_t^{\alpha}) = \lb{\hat{\sigma}_t^{\alpha}(a_t^{\alpha}|\hat{\mu}^G_{1:t},\hat{x}_{1:t}^{\alpha}) \;\;\;\;\; x_t^{\alpha} = \hat{x}_t^{\alpha} \\ \text{arbitrary} \;\;\;\;\;\;\;\;\;\;\;\;\;\; \text{otherwise.}  }
   \end{equation}
   
   Then for $\hat{\mu}^G_{1:t}, \hat{x}_{1:t}^{\alpha}$, we have
   \seq{
   \eq{
   &V^{\alpha}_t(\hmu^G_t, \hat{x}_t^{\alpha}) 
   \nn \\
   &= \max_{\gamma_t(\cdot|\hat{x}_t^{\alpha})} \E^{\gamma_t(\cdot|\hat{x}_t^{\alpha}) \tsigma_t^{-\alpha}} \left\{ R(\hat{x}_t^{\alpha},A_t^{\alpha},\hmu^G_t) + \delta V^{\alpha}_{t+1} (\phi(\hmu^G_t,\tgamma_t;g^{\alpha}), X_{t+1}^{\alpha}) \big\lvert \hmu^G_t, \hat{x}_{t}^{\alpha} \right\}, \label{eq:E11}\\
   &\geq\E^{\hat{\gamma}_t^{\alpha}(\cdot|\hat{x}_t^{\alpha}) \tsigma_t^{-\alpha}} \left\{ R(X_t^{\alpha},A_t^{\alpha},\mu^G_t;g^{\alpha}) + \delta V^{\alpha}_{t+1} (\phi(\hmu^G_t,\tgamma_t;g^{\alpha}), {X}_{t+1}^{\alpha}) \big\lvert \hmu^G_t,\hat{x}_{t}^{\alpha} \right\}   
   \\ \nn 
   &=\sum_{a_t^{\alpha},x_{t+1}^{\alpha}}   \left\{ R(\hat{x}_t^{\alpha},a_t^{\alpha},\hmu^G_t) + \delta V_{t+1} (\phi(\hmu^G_t,\tgamma_t;g^{\alpha}), x_{t+1}^{\alpha})\right\}
   \hat{\gamma}_t(a^{\alpha}_t|\hat{x}_t^{\alpha})Q^{\alpha}(x_{t+1}^{\alpha}|\hat{x}_t^{\alpha},a_t^{\alpha},\hmu^G_t;g^{\alpha})  
   \\ \nn 
   &= \sum_{a_t^{\alpha},x_{t+1}^{\alpha}}  \left\{ R(\hat{x}_t^{\alpha},a_t^{\alpha},\hmu^G_t) + \delta V^{\alpha}_{t+1} (\phi(\hmu^G_t,\tgamma_t;g^{\alpha}), x_{t+1}^{\alpha})\right\}
   \hat{\sigma}_t(a_t^{\alpha}|\hat{\mu}^G_{1:t} ,\hat{x}_{1:t}^{\alpha}) Q^{\alpha}(x_{t+1}^{\alpha}|\hat{x}_t^{\alpha},a_t^{\alpha},\hmu^G_t;g^{\alpha}) \label{eq:E9}\\
   &= \E^{\hat{\sigma}_t^{\alpha} \tsigma_t^{-\alpha}} \left\{ R(\hat{x}_t^{\alpha},a_t^{\alpha},\hmu^G_t)+ \delta V^{\alpha}_{t+1} (\phi(\hmu^G_t,\tgamma_t), X_{t+1}^{\alpha};g^{\alpha}) \big\lvert \hat{\mu}^G_{1:t},  \hat{x}_{1:t}^{\alpha} \right\}  \\
   &> V_t^{\alpha}(\hmu^G_t, \hat{x}_{t}^{\alpha}), \label{eq:E10}
   }
   where \eqref{eq:E11} follows from definition of $V_t$ in \eqref{eq:Vdef}, \eqref{eq:E9} follows from definition of $\hat{\gamma}_t^{\alpha}$ and \eqref{eq:E10} follows from \eqref{eq:E8}. However this leads to a contradiction.
   }
   \endproof

   \begin{lemma}
   \label{lemma:1}
   $\forall i\in [N], t\in [T], (\mu^G_{1:t}, x_{1:t}^{\alpha})\in \mathcal{H}_t^{\alpha}$,
   \begin{gather}
   V^{\alpha}_t(\mu^G_{t}, x_t^{\alpha}) =
   %\\ \hs{-0.2cm} 
   \E^{\tsigma_{t:T}^{\alpha} \tsigma_{t:T}^{-\alpha}} \left\{ \sum_{n=t}^T \delta^{n-t}R(X_n^{\alpha},A_n^{\alpha},\mu^G_n;g^{\alpha}) \big\lvert  \mu^G_{1:t}, x_{1:t}^{\alpha} \right\} .
   \end{gather} 
   \end{lemma}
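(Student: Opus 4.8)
The plan is to prove Lemma~\ref{lemma:1} by backward induction on $t$, exactly mirroring the backward recursion that defines $V_t$ and $\tgamma_t$ in~\eqref{eq:VT+1}--\eqref{eq:Vdef}. The statement asserts that the value function $V_t$ produced by the algorithm coincides with the actual expected reward-to-go when every player (including player $\alpha$) follows the equilibrium strategy $\tsigma$. Since $\tsigma$ is Markovian in $(\mu^G_t, x_t^\alpha)$ by construction~\eqref{eq:sigma_fh}, the conditioning on the full history $(\mu^G_{1:t}, x_{1:t}^\alpha)$ collapses to conditioning on $(\mu^G_t, x_t^\alpha)$, which is what makes the recursion close.

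First I would establish the base case $t = T$: by~\eqref{eq:Vdef} with $V_{T+1} \equiv 0$ from~\eqref{eq:VT+1}, $V_T(\mu^G_T, x_T^\alpha) = \E^{\tgamma_T(\cdot|x^\alpha)}[R(X_T^\alpha, A_T^\alpha, \mu^G_T; g^\alpha) \mid \mu^G_T, x_T^\alpha]$, and since $\tsigma_T^\alpha(\cdot|\mu^G_{1:T}, x_{1:T}^\alpha) = \tgamma_T(\cdot|x_T^\alpha)$, this equals the right-hand side at $t=T$. For the inductive step, assume the claim holds at $t+1$. Starting from the definition~\eqref{eq:Vdef}, I would write $V_t(\mu^G_t, x_t^\alpha)$ as the expectation under $\tgamma_t$ of $R(X_t^\alpha, A_t^\alpha, \mu^G_t; g^\alpha) + \delta V_{t+1}(\phi(\mu^G_t, \tgamma_t), X_{t+1}^\alpha)$. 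The key move is to observe that under the strategy profile $\tsigma$, the population state updates deterministically via $\mu^G_{t+1} = \phi(\mu^G_t, \tgamma_t; g^\alpha)$ (the McKean--Vlasov equation~\eqref{eq:muG_update}), so the argument $\phi(\mu^G_t, \tgamma_t)$ appearing inside $V_{t+1}$ is exactly the realized $\mu^G_{t+1}$. Then I apply the induction hypothesis to replace $V_{t+1}(\mu^G_{t+1}, X_{t+1}^\alpha)$ with $\E^{\tsigma_{t+1:T}}[\sum_{n=t+1}^T \delta^{n-t-1} R(X_n^\alpha, A_n^\alpha, \mu^G_n; g^\alpha) \mid \mu^G_{1:t+1}, x_{1:t+1}^\alpha]$, use the tower property of conditional expectation together with the fact that $A_t^\alpha \sim \tgamma_t(\cdot|x_t^\alpha) = \tsigma_t^\alpha(\cdot | \mu^G_{1:t}, x_{1:t}^\alpha)$ and $X_{t+1}^\alpha \sim Q^\alpha(\cdot | x_t^\alpha, A_t^\alpha, \mu^G_t; g^\alpha)$, and collapse the nested expectations into a single expectation under $\tsigma_{t:T}$ to obtain the claimed identity at $t$.

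The main technical point to handle carefully — and the step I expect to be the only real obstacle — is the bookkeeping that lets the history-dependent conditioning reduce to state-dependent conditioning: one must argue that, given the Markovian nature of $\tsigma$ and the deterministic $\phi$-update of $\mu^G$, the joint law of $(A_{t:T}^\alpha, X_{t+1:T}^\alpha, \mu^G_{t+1:T})$ conditioned on $(\mu^G_{1:t}, x_{1:t}^\alpha)$ depends only on $(\mu^G_t, x_t^\alpha)$. This requires noting that the other players' actions $A_t^{-\alpha}$ also depend only on the current state through $\tsigma^{-\alpha}$, and that $\mu^G$ evolves autonomously (as a deterministic function of itself and $\tgamma$) and thus does not require tracking the private trajectory. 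Everything else is routine manipulation of conditional expectations, linearity, and substitution of the algorithm's definitions. I would present the inductive step as a short displayed chain of equalities with each link annotated by whether it uses~\eqref{eq:Vdef}, the McKean--Vlasov update, the induction hypothesis, or the tower property.
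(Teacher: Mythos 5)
Your proposal is correct and follows essentially the same argument as the paper: backward induction on $t$ with base case given by \eqref{eq:Vdef} at $T$, and an inductive step that combines the tower property, the deterministic McKean--Vlasov update $\mu^G_{t+1}=\phi(\mu^G_t,\tgamma_t;g^{\alpha})$, the Markovian form of $\tsigma$ in \eqref{eq:sigma_fh}, and the definition \eqref{eq:Vdef}. The only cosmetic difference is that you run the chain of equalities from $V_t$ toward the reward-to-go, whereas the paper runs it from the reward-to-go toward $V_t$; the steps are identical.
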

   
   \proof
   %This lemma is in the same spirit as the following statement: ``For a controlled Markov process, if Markov policies are played, then the resulting process is a Markov process, where reward-to-go at any time can be denoted by a function of the current state."
   %
   \seq{
   We prove the lemma by induction. For $t=T$,
   \eq{
    \E^{\tsigma_{T}^{\alpha} \tsigma_{T}^{-\alpha} } \left\{  R(X_t^{\alpha},A_t^{\alpha},\mu^G_t;g^{\alpha}) \big\vert V^{\alpha}ert \mu^G_{1:T},  x_{1:T}^{\alpha} \right\}
    &= \sum_{a_T^{\alpha}} R(X_t^{\alpha},A_t^{\alpha},\mu^G_t;g^{\alpha}) \tsigma_{T}^{\alpha}(a_T^{\alpha}|\mu^G_{T},x_{T}^{\alpha}) \\
    &= V^{\alpha}_T(\mu^G_{T}, x_T^{\alpha}) \label{eq:C1},
   }
   }
   where \eqref{eq:C1} follows from the definition of $V^{\alpha}_t$ in \eqref{eq:Vdef}.
   Suppose the claim is true for $t+1$, i.e., $\forall i\in [N], t\in [T], (\mu^G_{1:t+1}, x_{1:t+1}^{\alpha})\in \mathcal{H}_{t+1}^{\alpha}$
   \begin{gather}
   V^{\alpha}_{t+1}(\mu^G_{t+1}, x_{t+1}^{\alpha}) = \E^{\tsigma_{t+1:T}^{\alpha} \sigma_{t+1:T}^{-\alpha}}
   %\\
   \left\{ \sum_{n=t+1}^T \delta^{n-t-1}R(X_n^{\alpha},A_n^{\alpha},\mu^G_n;g^{\alpha}) \big\lvert \mu^G_{1:t+1}, x_{1:t+1}^{\alpha} \right\} 
   \label{eq:CIndHyp}.
   \end{gather}
   Then $\forall i\in [N], t\in [T], (\mu^G_{1:t}, x_{1:t}^{\alpha})\in \mathcal{H}_t^{\alpha}$, we have
   \seq{
   \eq{
   &\E^{\tsigma_{t:T}^{\alpha} \tsigma_{t:T}^{-\alpha} } \left\{ \sum_{n=t}^T \delta^{n-t} R(X_n^{\alpha},A_n^{\alpha},\mu^G_n;g^{\alpha}) \big\lvert  \mu^G_{1:t}, x_{1:t}^{\alpha} \right\} 
   \nonumber 
   \\
   &=  \E^{\tsigma_{t:T}^{\alpha} \tsigma_{t:T}^{-\alpha} } \left\{R(X_t^{\alpha},A_t^{\alpha},\mu^G_t;g^{\alpha})  \right.
   \nonumber \\ 
   &\left.+\delta \E^{\tsigma_{t:T}^{\alpha} \tsigma_{t:T}^{-\alpha} }  \left\{ \sum_{n=t+1}^T \delta^{n-t-1}R(X_n^{\alpha},A_n^{\alpha},\mu^G_n;g^{\alpha})\big\lvert \mu^G_{1:t},  \mu^G_{t+1}, x_{1:t}^{\alpha},X_{t+1}^{\alpha}\right\} \big\lvert \mu^G_{1:t},  x_{1:t}^{\alpha} \right\} \label{eq:C2}
   \\
   &=  \E^{\tsigma_{t:T}^{\alpha} \tsigma_{t:T}^{-\alpha} } \left\{R(X_t^{\alpha},A_t^{\alpha},\mu^G_t;g^{\alpha}) \right.
   \nonumber 
   \\
   &\left.  +\delta\E^{\tsigma_{t+1:T}^{\alpha} \tsigma_{t+1:T}^{-\alpha} }\left\{ \sum_{n=t+1}^T \delta^{n-t-1}R(X_n^{\alpha},A_n^{\alpha},\mu^G_n;g^{\alpha})\big\lvert \mu^G_{1:t},\mu^G_{t+1}, x_{1:t}^{\alpha},X_{t+1}^{\alpha}\right\} \big\lvert \mu^G_{1:t}, x_{1:t}^{\alpha} \right\} \label{eq:C3}
   \\
   &=  \E^{\tsigma_{t:T}^{\alpha} \tsigma_{t:T}^{-\alpha} } \left\{R(X_t^{\alpha},A_t^{\alpha},\mu^G_t;g^{\alpha}) +  \delta V^{\alpha}_{t+1}(\mu^G_{t+1}, X_{t+1}^{\alpha}) \big\lvert  \mu^G_{1:t}, x_{1:t}^{\alpha} \right\} 
   \label{eq:C4}
   \\
   &=  \E^{\tsigma_{T}^{\alpha} \tsigma_{T}^{-\alpha}} \left\{R(X_t^{\alpha},A_t^{\alpha},\mu^G_t;g^{\alpha}) +  \delta V^{\alpha}_{t+1}(\mu^G_{t+1}, X_{t+1}^{\alpha}) \big\lvert  \mu^G_{1:t}, x_{1:t}^{\alpha} \right\} 
   \label{eq:C5}
   \\
   &=V^{\alpha}_{t}(\mu^G_t, x_t^{\alpha}) \label{eq:C6},
   }
   }
   \eqref{eq:C4} follows from the induction hypothesis in \eqref{eq:CIndHyp}, \eqref{eq:C5} follows because the random variables involved in expectation, $X_t^{\alpha},A_t^{\alpha},\mu^G_t,\mu^G_{t+1},X_{t+1}^{\alpha}$ do not depend on $\tsigma_{t+1:T}^{\alpha} \sigma_{t+1:T}^{-\alpha}$ and \eqref{eq:C6} follows from the definition of $V^{\alpha}_t$ in \eqref{eq:Vdef}.
   \endproof
   
   \section{}
   \label{app:C}
   \proof
   We prove this by contradiction. Suppose for any equilibrium generating function $\theta$ that generates an MPE $\tsigma$, there exists $t\in[T], i\in[N], \mu^G_{1:t}\in\cH_t^c,$ such that \eqref{eq:m_FP} is not satisfied for $\theta$
   %\footnote{Note that for $\mu^G_t \neq \mu^G_t $ for any $a_{1:t-1}$, $\phi$ can be arbitrarily defined without affecting the definition of $(\tsigma,\mu^*)$.}
   i.e. for $\tgamma_t = \theta_t[\mu^G_t] = \tsigma_t(\cdot|\mu^G_t,\cdot)$,
   \eq{
    \tilde{\gamma}_t(\cdot|x_t^{\alpha}) \not\in \arg\max_{\gamma_t(\cdot|x_t^{\alpha})} \E^{\gamma_t(\cdot|x_t^{\alpha})} \left\{ R_t(X_t^{\alpha},A_t^{\alpha},\mu^G_t;g^{\alpha}) + V^{\alpha}_{t+1}(\phi(\mu^G_t,\tilde{\gamma}_t;g^{\alpha}), X_{t+1}^{\alpha}) \big\lvert  x_t^{\alpha},\mu^G_t \right\} . \label{eq:FP4}
     }
     Let $t$ be the first instance in the backward recursion when this happens. This implies $\exists\ \hat{\gamma}_t$ such that
     \eq{
     \E^{\hat{\gamma}_t(\cdot|x_t^{\alpha})} \left\{ R_t(X_t^{\alpha},A_t^{\alpha},\mu^G_t;g^{\alpha})+ V_{t+1} (\phi(\mu^G_t, \tilde{\gamma}_t;g^{\alpha}), X_{t+1}^{\alpha}) \big\lvert  \mu^G_{1:t},x_{1:t}^{\alpha}\right\}
     \nn\\
     > \E^{\tgamma_t(\cdot|x_t^{\alpha})} \left\{ R_t(X_t^{\alpha},A_t^{\alpha},\mu^G_t;g^{\alpha}) + V^{\alpha}_{t+1} (\phi(\mu^G_t, \tilde{\gamma}_t;g^{\alpha}), X_{t+1}^{\alpha}) \big\lvert  \mu^G_{1:t},x_{1:t}^{\alpha} \right\} \label{eq:E1}
     }
     This implies for $\hat{\sigma}_t(\cdot|\mu^G_t,\cdot) = \hat{\gamma}_t$,
     \eq{
     &\E^{\tsigma_{t:T}} \left\{ \sum_{n=t}^T R_n(X_n^{\alpha},A_n^{\alpha},\mu^G_n;g^{\alpha}) \big\lvert  \mu^G_{1:t-1}, x_{1:t}^{\alpha} \right\}
     \nn\\
     %&= \E^{\tsigma_t} \left\{ R_t(X_t,A_t,\mu^G_t) + \E^{\tsigma_{t:T}}  \left\{ \sum_{n=t+1}^T R_n(X_n,A_n^{\alpha},\mu^G_n) \big\lvert \mu^G_{1:t},\mu^G_{t+1}, x_{1:t},X_{t+1} \right\}  \big\vert \mu^G_{1:t}, x_{1:t} \right\}% \label{eq:E2a}
   \\
     &= \E^{\tsigma_t^{\alpha},\tsigma_t^{-\alpha}} \left\{ R_t(X_t^{\alpha},A_t^{\alpha},\mu^G_t;g^{\alpha}) + \E^{\tsigma_{t+1:T}^{\alpha} \tsigma_{t+1:T}^{-\alpha}}   \left\{ \sum_{n=t+1}^T R_n(X_n^{\alpha},A_n^{\alpha},\mu^G_n;g^{\alpha}) \big\lvert \mu^G_{1:t-1},\mu^G_{t+1}, x_{1:t}^{\alpha},X_{t+1}^{\alpha} \right\}  \big\vert \mu^G_{1:t}, x_{1:t}^{\alpha} \right\} \label{eq:E2}
     \\
     &=\E^{\tgamma_t(\cdot|x_t) \tilde{\gamma}^{-\alpha}_t} \left\{ R_t(X_t^{\alpha},A_t^{\alpha},\mu^G_t;g^{\alpha}) + V^{\alpha}_{t+1} (\phi(\mu^G_t, \tilde{\gamma}_t;g^{\alpha}), X_{t+1}^{\alpha}) \big\lvert  \mu^G_t,x_t^{\alpha} \right\} \label{eq:E3}
     \\
     &< \E^{\hat{\sigma}_t(\cdot|\mu^G_t,x_t^{\alpha}) \tilde{\gamma}^{-\alpha}_t} \left\{ R_t(X_t^{\alpha},A_t^{\alpha},\mu^G_t;g^{\alpha}) + V^{\alpha}_{t+1} (\phi(\mu^G_t, \tilde{\gamma}_t;g^{\alpha}), X_{t+1}^{\alpha}) \big\lvert  \mu^G_t,x_t \right\}\label{eq:E4}
     \\
     &= \E^{\hat{\sigma}_t \tsigma_t^{-\alpha}} \left\{ R_t(X_t^{\alpha},A_t^{\alpha},\mu^G_t;g^{\alpha}) +  \E^{\tsigma_{t+1:T}^{\alpha} \tsigma_{t+1:T}^{-\alpha}}\left\{ \sum_{n=t+1}^T R_n(X_n^{\alpha},A_n^{\alpha},\mu^G_n;g^{\alpha}) \big\lvert \mu^G_{1:t},\mu^G_{t+1}, x_{1:t}^{\alpha},X_{t+1}^{\alpha}\right\} \big\vert \mu^G_{1:t}, x_{1:t} ^{\alpha}\right\}\label{eq:E5}
     \\
     &=\E^{\hat{\sigma}_t,\tsigma_{t+1:T}^{\alpha} \tsigma_{t:T}^{-\alpha}} \left\{ \sum_{n=t}^T R_n(X_n^{\alpha},A_n^{\alpha},\mu^G_n;g^{\alpha}) \big\lvert  \mu^G_{1:t}, x_{1:t}^{\alpha}\right\},\label{eq:E6}
     }
     where \eqref{eq:E3} follows from the definitions of $\tgamma_t$ and Lemma~\ref{lemma:1}, \eqref{eq:E4} follows from \eqref{eq:E1} and the definition of $\hat{\sigma}_t$, \eqref{eq:E5} follows from Lemma~\ref{lemma:2}. However, this leads to a contradiction since $\tsigma$ is an MPE of the game.
   \endproof
   
   \section{}
   \label{app:D}
       We divide the proof into two parts: first we show that the value function $ V $ is at least as big as any reward-to-go function; secondly we show that under the strategy $ \tsigma $, reward-to-go is $ V^{\alpha} $. Note that $h_t^{\alpha} := (\mu^G_{1:t},x_{1:t}^{\alpha})$.
   
   \paragraph*{Part 1}
   For any $ i \in \mN $, $ \sigma^{\alpha} $ define the following reward-to-go functions
   \begin{subequations} \label{eqihr2g}
   %	\begin{align} 				
   \eq{
       W_t^{\sigma^{\alpha}}(h_t^{\alpha}) &= \mE^{\sigma^{\alpha},\tsigma^{-\alpha}} \lpr \sum_{n=t}^\infty \delta^{n-t} R(X_n^{\alpha},A_n^{\alpha},\mu^G_n;g^{\alpha}) \mid h_t^{\alpha} \rpr\\
   %	\\
       %\label{eqr2gfh}		
       W_t^{\sigma^{\alpha},T}(h_t^{\alpha}) &= \mE^{\sigma^{\alpha},\tsigma^{-\alpha}} \lpr \sum_{n=t}^T \delta^{n-t} R(X_n^{\alpha},A_n^{\alpha},\mu^G_n;g^{\alpha})
   %	\\
       +  \delta^{T+1-t} V^{\alpha}(\mu^G_{T+1},X^{\alpha}_{T+1}) \mid h_t^{\alpha} \rpr.
   }
   %	\end{align}
   \end{subequations}
   Since $ \mX,\mA $ are finite sets the reward $ R$ is absolutely bounded, the reward-to-go $ W_t^{\sigma^{\alpha}}(h_t^{\alpha}) $ is finite $ \forall $ $ i,t,\sigma^{\alpha},h_t^{\alpha} $.
   
   For any $ i \in \mN $, $ h_t^{\alpha} \in \mathcal{H}_t^{\alpha} $,
   \eq{ \label{eqdc}
   %\begin{equation}
   V^{\alpha}\big(\mu^G_t,x_t^{\alpha}\big) - W_t^{\sigma^{\alpha}}(h_t^{\alpha})
   = \Big[ V^{\alpha}\big(\mu^G_t,x_t^{\alpha}\big) - W_t^{\sigma^{\alpha},T}(h_t^{\alpha}) \Big]
   %	\\
   + \Big[ W_t^{\sigma^{\alpha},T}(h_t^{\alpha}) - W_t^{\sigma^{\alpha}}(h_t^{\alpha}) \Big]
   %\end{equation} 	
       }
   Combining results from Lemmas~\ref{thmfh2} and~\ref{lemfhtoih} in Appendix~\ref{app:D}, %
   %8 and 9 in Appendix G of the technical report~\cite{VaSiAn17arxiv}, 
   the term in the first bracket in RHS of~\eqref{eqdc} is non-negative. Using~\eqref{eqihr2g}, the term in the second bracket is
   %	\eq{
   \begin{gather}\label{eqdiff}	
   \left( \delta^{T+1-t} \right) \mE^{\sigma^{\alpha},\tsigma^{-\alpha}} \Big\{- \sum_{n=T+1}^\infty \delta^{n-(T+1)} R(X_n^{\alpha},A_n^{\alpha},\mu^G_n;g^{\alpha})
   %	\\
   + V^{\alpha}(\mu^G_{T+1},X^{\alpha}_{T+1}) \mid h_t^{\alpha} \Big\}.
   \end{gather} 	
   %	}
   The summation in the expression above is bounded by a convergent geometric series. Also, $ V^{\alpha} $ is bounded. Hence the above quantity can be made arbitrarily small by choosing $ T $ appropriately large. Since the LHS of~\eqref{eqdc} does not depend on $ T $, which implies,
   \begin{gather}
   V^{\alpha}\big(\mu^G_t,x_t^{\alpha}\big) \ge W_t^{\sigma^{\alpha}}(h_t^{\alpha}).
   \end{gather}

   \paragraph*{Part 2}
   Since the strategy the equilibrium strategy $ \tsigma $ generated in~\eqref{eq:sigma_ih} is such that $\tsigma^{\alpha}_t $ depends on $ h_t^{\alpha} $ only through $ \mu^G_t $ and $ x_t^{\alpha} $, the reward-to-go $ W_t^{\tsigma^{\alpha}} $, at strategy $ \tsigma $, can be written (with abuse of notation) as
   \begin{gather}
   %\begin{gather}
   W_t^{\tsigma^{\alpha}}(h_t^{\alpha}) = W_t^{\tsigma^{\alpha}}(\mu^G_t,x_t^{\alpha})
   %\\
   = \mE^{\tsigma} \lpr \sum_{n=t}^\infty \delta^{n-t} R(X_n^{\alpha},A_n^{\alpha},\mu^G_n;g^{\alpha}) \mid \mu^G_t,x_t^{\alpha} \rpr.
   %\end{gather}
   \end{gather}
   
   For any $ h_t^{\alpha} \in \mathcal{H}_t^{\alpha} $,
   \begin{subequations}
   \eq{
       W_t^{\tsigma^{\alpha}}(\mu^G_t,x_t^{\alpha})
       &= \mE^{\tsigma} \lpr R(X_t^{\alpha},A_t^{\alpha},\mu^G_t;g^{\alpha})	+ \delta W_{t+1}^{\tsigma^{\alpha}}
   %	\\
       \big(\phi(\mu^G_t,\theta[\mu^G_t],g^{\alpha})),X_{t+1}^{\alpha}\big)  \mid \mu^G_t,x_t^{\alpha} \rpr\\
   %\\
       V^{\alpha}(\mu^G_t,x_t^{\alpha})
       &= \mE^{\tsigma} \Big\{ R(X_t^{\alpha},A_t^{\alpha},\mu^G_t;g^{\alpha}) + \delta V^{\alpha}
   %	\\
       \big(\phi(\mu^G_t,\theta[\mu^G_t],g^{\alpha})),X_{t+1}^{\alpha}\big)  \mid \mu^G_t,x_t^{\alpha} \Big\}.
   %\end{gather}
   }
   \end{subequations}
   Repeated application of the above for the first $ n $ time periods gives
   \begin{subequations}
   \eq{
   %\begin{align}
       W_t^{\tsigma^{\alpha}}(\mu^G_t,x_t^{\alpha})
       &= \mE^{\tsigma}\Bigg\{ \sum_{m=t}^{t+n-1} \delta^{m-t} R(X_t^{\alpha},A_t^{\alpha},\mu^G_t;g^{\alpha})
   %	\\
       + \delta^{n}  W_{t+n}^{\tsigma^{\alpha}}\big(\mu^G_{t+n},X_{t+n}^{\alpha}\big)  \mid \mu^G_t,x_t^{\alpha} \Bigg\}
   \\
       V^{\alpha}(\mu^G_t,x_t^{\alpha})
       &= \mE^{\tsigma} \Bigg\{ \sum_{m=t}^{t+n-1} \delta^{m-t} R(X_t^{\alpha},A_t^{\alpha},\mu^G_t;g^{\alpha})
   %	\\
       + \delta^{n}  V^{\alpha}\big(\mu^G_{t+n},X_{t+n}^{\alpha}\big)  \mid \mu^G_t,x_t^{\alpha} \Bigg\}.
   %\end{align}
   }
   \end{subequations}
   Taking differences results in
   %\begin{subequations}
       \eq{
       W_t^{\tsigma^{\alpha}}(\mu^G_t,x_t^{\alpha})  - V^{\alpha}(\mu^G_t,x_t^{\alpha})
       =\delta^n \mE^{\tsigma}
   %	\\
         \lpr W_{t+n}^{\tsigma^{\alpha}}\big(\mu^G_{t+n},X_{t+n}^{\alpha}\big)
       - V\big(\mu^G_{t+n},X_{t+n}^{\alpha}\big) \mid \mu^G_t,x_t^{\alpha} \rpr.
       }
   %\end{subequations}
   Taking absolute value of both sides then using Jensen's inequality for $ f(x) = \vert x \vert $ and finally taking supremum over $ h_t^{\alpha} $ reduces to
   \eq{
   \sup_{h_t^{\alpha}} \big\vert W_t^{\tsigma^{\alpha}}(\mu^G_t,x_t^{\alpha})  - V^{\alpha}(\mu^G_t,x_t^{\alpha}) \big\vert 
   \le \delta^n \sup_{h_t^{\alpha}}  \mE^{\tsigma}
   %\\
    \lpr\big\vert W_{t+n}^{\tsigma^{\alpha}}(\mu^G_{t+n},X_{t+n}^{\alpha})
    - V^{\alpha}(\mu^G_{t+n},X_{t+n}^{\alpha}) \big\vert  \mid \mu^G_t,x_t^{\alpha} \rpr.
   }
   Now using the fact that $ W_{t+n},V$ are bounded and that we can choose $ n $ arbitrarily large, we get $ \sup_{h_t^{\alpha}} \vert W_t^{\tsigma^{\alpha}}(\mu^G_t,x_t^{\alpha})  - V^{\alpha}(\mu^G_t,x_t^{\alpha}) \vert = 0 $. 	
   
   \section{}
   \label{app:E}
   In this section, we present three lemmas. Lemma~\ref{thmfh1} is intermediate technical results needed in the proof of Lemma~\ref{thmfh2}. Then the results in Lemma~\ref{thmfh2} and~\ref{lemfhtoih} are used in Appendix~\ref{app:C} for the proof of Theorem~\ref{thih}. The proof for Lemma~\ref{thmfh1} below isn't stated as it analogous to the proof of Lemma~\ref{lemma:2} from Appendix~\ref{app:B}, used in the proof of Theorem~\ref{Thm:Main} (the only difference being a non-zero terminal reward in the finite-horizon model).
   
   Define the reward-to-go $ W_t^{\sigma^{\alpha},T} $ for any agent $ i $ and strategy $ \sigma^{\alpha} $  as
   %\eq{
   \begin{equation}\label{eqr2gfh}
   W_t^{\sigma^{\alpha},T}(\mu^G_{1:t},x_{1:t}^{\alpha}) = \mE^{\sigma^{\alpha},\tsigma^{-\alpha}} \big[ \sum_{n=t}^T \delta^{n-t} R(X_n^{\alpha},A_n^{\alpha},\mu^G_n;g^{\alpha})
   %\\
   + \delta^{T+1-t} G(\mu^G_{T+1},X^{\alpha}_{T+1}) \mid \mu^G_{1:t},x_{1:t}^{\alpha} \big].
   \end{equation}
   %}
   Here agent $ i $'s strategy is $ \sigma^{\alpha} $ whereas all other agents use strategy $ \tsigma^{-\alpha} $ defined above. Since $ \mX,\mA $ are assumed to be finite and $ G $ absolutely bounded, the reward-to-go is finite $ \forall $ $ i,t,\sigma^{\alpha},\mu^G_{1:t},x_{1:t}^{\alpha} $.
   In the following, any quantity with a $T$ in the superscript refers the finite horizon model with terminal reward $G$.
   \begin{lemma}\label{thmfh1}
       For any $ t \in [T] $, $ i \in \mN $, $ \mu^G_{1:t},x_{1:t}^{\alpha} $ and $ \sigma^{\alpha} $,	
       \begin{equation} \label{eqintlem}
       V_t^{T,\alpha}(\mu^G_t,x_t^{\alpha}) \ge \mE^{\sigma^{\alpha},\tsigma^{-\alpha}} \big[ R(X_t^{\alpha},A_t^{\alpha},\mu^G_t;g^{\alpha})
       %	\\
       + \delta V_{t+1}^{T,\alpha}\big( \phi(\mu^G_t,\theta[\mu^G_t],g^{\alpha}) , X_{t+1}^{\alpha} \big) \mid \mu^G_{1:t},x_{1:t}^{\alpha} \big].
       \end{equation}
       %	}
   \end{lemma}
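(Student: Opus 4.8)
The plan is to argue by contradiction, essentially reusing the one-step deviation argument from the proof of Lemma~\ref{lemma:2} in Appendix~\ref{app:B}; the only modification is that the finite-horizon truncation now carries the nonzero terminal reward $G$, i.e.\ $V_{T+1}^{T}=G$ and $V_t^{T}$ for $t\le T$ is produced by the same backward fixed-point recursion~\eqref{eq:m_FP}--\eqref{eq:Vdef} with $V^{T}$ in place of $V$. Since $G$ is absolutely bounded, each $V_t^{T}$ is finite, and by construction it still obeys the maximizer characterization
\[
V_t^{T}(\mu^G_t,x_t^{\alpha}) \;=\; \max_{\gamma_t(\cdot\mid x_t^{\alpha})}\ \E^{\gamma_t(\cdot\mid x_t^{\alpha})}\!\left[\, R(X_t^{\alpha},A_t^{\alpha},\mu^G_t;g^{\alpha}) + \delta\, V_{t+1}^{T}\!\big(\phi(\mu^G_t,\tgamma_t;g^{\alpha}),X_{t+1}^{\alpha}\big)\;\big\vert\;\mu^G_t,x_t^{\alpha}\,\right],
\]
where $\tgamma_t=\theta[\mu^G_t]$ is the equilibrium prescription at $\mu^G_t$ and the occurrence of $\tgamma_t$ inside $\phi$ is held fixed (it is not the optimization variable), so the right-hand side is a genuinely pointwise maximization over $\gamma_t(\cdot\mid x_t^{\alpha})\in\cP(\cA)$.

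First I would suppose \eqref{eqintlem} fails: there exist $i$, a strategy $\hat\sigma^{\alpha}$ and a history $\hmu^G_{1:t},\hat x_{1:t}^{\alpha}$ with $\E^{\hat\sigma_t^{\alpha}\tsigma_t^{-\alpha}}[\, R(X_t^{\alpha},A_t^{\alpha},\mu^G_t;g^{\alpha})+\delta V_{t+1}^{T}(\phi(\hmu^G_t,\tgamma_t;g^{\alpha}),X_{t+1}^{\alpha})\mid\hmu^G_{1:t},\hat x_{1:t}^{\alpha}] > V_t^{T}(\hmu^G_t,\hat x_t^{\alpha})$. Mirroring the construction of $\hat\gamma_t^{\alpha}$ in Lemma~\ref{lemma:2}, I would take the prescription $\hat\gamma_t^{\alpha}(\cdot\mid x_t^{\alpha})$ to equal $\hat\sigma_t^{\alpha}(\cdot\mid\hmu^G_{1:t},\hat x_{1:t}^{\alpha})$ when $x_t^{\alpha}=\hat x_t^{\alpha}$ and to be arbitrary otherwise. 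Since $\hat\gamma_t^{\alpha}(\cdot\mid\hat x_t^{\alpha})$ is a feasible argument in the displayed maximization, substituting it produces a lower bound on $V_t^{T}(\hmu^G_t,\hat x_t^{\alpha})$; expanding that expectation over $(A_t^{\alpha},X_{t+1}^{\alpha})$ through the measure $\hat\gamma_t^{\alpha}(a_t^{\alpha}\mid\hat x_t^{\alpha})Q^{\alpha}(x_{t+1}^{\alpha}\mid\hat x_t^{\alpha},a_t^{\alpha},\hmu^G_t;g^{\alpha})$ and using $\hat\gamma_t^{\alpha}(\cdot\mid\hat x_t^{\alpha})=\hat\sigma_t^{\alpha}(\cdot\mid\hmu^G_{1:t},\hat x_{1:t}^{\alpha})$ rewrites this lower bound exactly as the left-hand side of the assumed strict inequality. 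This yields $V_t^{T}(\hmu^G_t,\hat x_t^{\alpha})\ge(\,\cdot\,)>V_t^{T}(\hmu^G_t,\hat x_t^{\alpha})$, a contradiction, proving \eqref{eqintlem}.

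I do not anticipate a genuine obstacle: this is the standard one-shot deviation principle and is structurally identical to Lemma~\ref{lemma:2}. The two points that need a line of care are (i) carrying the nonzero terminal reward $G$ so that $V^{T}$, not $V$, appears uniformly and remains finite---immediate from boundedness of $G$ and the form of~\eqref{eq:m_FP}--\eqref{eq:Vdef}; and (ii) noting explicitly that a unilateral deviation by agent $i$ leaves the population-state update $\phi(\hmu^G_t,\tgamma_t;g^{\alpha})$ unchanged, so the same $\mu^G_{t+1}$ appears on both sides of the inequality and the problem genuinely reduces to the pointwise-in-$x_t^{\alpha}$ maximization above. Propagating this one-step inequality backward in $t$, together with the analogue of Lemma~\ref{lemma:1} for the truncated model, is then what feeds into the subsequent lemmas of this appendix.
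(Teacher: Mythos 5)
Your proposal is correct and matches the paper's intended argument: the paper omits the proof of this lemma, stating only that it is analogous to Lemma~\ref{lemma:2} in Appendix~\ref{app:B} with a non-zero terminal reward, and your contradiction/one-step-deviation argument (constructing $\hat{\gamma}_t^{\alpha}$ from the deviating strategy at the realized history and invoking the pointwise maximization with $V^{T}$ and terminal reward $G$) is exactly that analogue.
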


   The result below shows that the value function from the backwards recursive algorithm is higher than any reward-to-go.
   
   \begin{lemma}\label{thmfh2}
       For any $ t \in [T] $, $ i \in \mN $, $ \mu^G_{1:t},x_{1:t}^{\alpha} $ and $ \sigma^{\alpha} $,	
       \begin{gather}
       V_t^{T,\alpha}(\mu^G_t,x_t^{\alpha}) \ge W_t^{\sigma^{\alpha},T}(\mu^G_{1:t},x_{1:t}^{\alpha}).
       \end{gather}
   \end{lemma}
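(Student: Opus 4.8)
The plan is to prove Lemma~\ref{thmfh2} by backward induction on $t$, in exactly the way Theorem~\ref{Thm:Main} was obtained by combining Lemma~\ref{lemma:2} and Lemma~\ref{lemma:1}, except that the nonzero terminal reward $G$ is now carried through the recursion. For the base case $t=T+1$, the sum defining $W^{\sigma^\alpha,T}$ in~\eqref{eqr2gfh} is empty and only the terminal term survives, so $W_{T+1}^{\sigma^\alpha,T}(\mu^G_{1:T+1},x_{1:T+1}^\alpha)=G(\mu^G_{T+1},x_{T+1}^\alpha)=V_{T+1}^{T}(\mu^G_{T+1},x_{T+1}^\alpha)$, and the desired inequality holds with equality.

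For the inductive step I would assume the claim at $t+1$, namely $V_{t+1}^{T}(\mu^G_{t+1},x_{t+1}^\alpha)\ge W_{t+1}^{\sigma^\alpha,T}(\mu^G_{1:t+1},x_{1:t+1}^\alpha)$ for every history, and then proceed in three moves. First, peel off the time-$t$ term using the tower property of conditional expectation (valid since $\sigma(\mu^G_{1:t+1},x_{1:t+1}^\alpha)\supseteq\sigma(\mu^G_{1:t},x_{1:t}^\alpha)$):
\[
W_t^{\sigma^\alpha,T}(\mu^G_{1:t},x_{1:t}^\alpha)=\mE^{\sigma^\alpha,\tsigma^{-\alpha}}\!\left[R(X_t^\alpha,A_t^\alpha,\mu^G_t;g^\alpha)+\delta\,W_{t+1}^{\sigma^\alpha,T}(\mu^G_{1:t},\mu^G_{t+1},x_{1:t}^\alpha,X_{t+1}^\alpha)\,\middle|\,\mu^G_{1:t},x_{1:t}^\alpha\right].
\]
Second, substitute the induction hypothesis inside the expectation to bound $W_{t+1}^{\sigma^\alpha,T}$ from above by $V_{t+1}^{T}(\mu^G_{t+1},X_{t+1}^\alpha)$; here I would invoke that a single agent's unilateral deviation $\sigma^\alpha$ does not move the continuum graphon mean field, so under $\tsigma^{-\alpha}$ we have $\mu^G_{t+1}=\phi(\mu^G_t,\theta[\mu^G_t],g^\alpha)$ deterministically given $\mu^G_t$, and $V_{t+1}^{T}$ depends only on $(\mu^G_{t+1},X_{t+1}^\alpha)$. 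Third, apply Lemma~\ref{thmfh1} to the resulting right-hand side to conclude $W_t^{\sigma^\alpha,T}(\mu^G_{1:t},x_{1:t}^\alpha)\le V_t^{T}(\mu^G_t,x_t^\alpha)$, which closes the induction.

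The main obstacle is bookkeeping rather than conceptual: making the recursive decomposition in the first move fully precise — in particular justifying that conditioning on the longer history and then on the shorter one collapses correctly, and that the process $(\mu^G_n,X_n^\alpha)$ appearing in $W_{t+1}^{\sigma^\alpha,T}$ is consistent with the deterministic update $\phi$ under the mixed profile $(\sigma^\alpha,\tsigma^{-\alpha})$, allowing for $\sigma^\alpha$ to be an arbitrary (history-dependent) deviation. Since Lemma~\ref{thmfh1} is stated to have a proof analogous to that of Lemma~\ref{lemma:2}, no new argument is needed there, and the induction itself mirrors Lemma~\ref{lemma:1} and the proof of Theorem~\ref{Thm:Main}; the only genuine novelty is the nonzero terminal condition $V_{T+1}^{T}=G$ at the base case.
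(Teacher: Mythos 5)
Your proposal is correct and follows essentially the same route as the paper: backward induction in which each step combines the one-step decomposition of $W_t^{\sigma^{\alpha},T}$ via the tower property, the induction hypothesis, and Lemma~\ref{thmfh1}, with the mean field updating deterministically through $\phi$ since a single deviation does not move it. The only differences are presentational --- you anchor the induction at $t=T+1$ with $V_{T+1}^{T}\equiv G$ and run the inequality chain from $W_t$ up to $V_t^{T}$, whereas the paper starts the base case at $t=T$ via the maximization property (which is just Lemma~\ref{thmfh1} at $T$) and writes the chain from $V_t^{T}$ down to $W_t$.
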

   \proof
   We use backward induction for this. At time $ T $, using the maximization property from~\eqref{eq:m_FP} (modified with terminal reward $ G $),
   \begin{subequations}
       \begin{align}
       	&V_T^{T,\alpha}(\mu^G_T,x_T^{\alpha})\\
       	&\defeq \mE^{\tilde{\gamma}_T^{\alpha,T}(\cdot \mid x_T^{\alpha}),\tilde{\gamma}_T^{-\alpha,T}} \big[ R(X_t^{\alpha},A_t^{\alpha},\mu^G_t;g^{\alpha}) + \delta G\big( \phi(\mu^G_T,\tilde{\gamma}_T^T)),X_{T+1}^{\alpha} \big) \mid \mu^G_T,x_T^{\alpha} \big]
       \\
       &\ge \mE^{{\gamma}_T^{\alpha,T}(\cdot \mid x_T^{\alpha}),\tilde{\gamma}_T^{-\alpha,T}} \big[ R(X_t^{\alpha},A_t^{\alpha},\mu^G_t;g^{\alpha})
       + \delta G\big( \phi(\mu^G_T,\tilde{\gamma}_T^T)) ,X_{T+1}^{\alpha} \big) \mid \mu^G_{1:T},x_{1:T}^{\alpha} \big]
       \\
       &= W_T^{\sigma^{\alpha},T}(h_T^{\alpha})
       \end{align}
   \end{subequations}
   Here the second inequality follows from~\eqref{eq:m_FP} and~\eqref{eq:Vdef} and the final equality is by definition in~\eqref{eqr2gfh}.
   
   Assume that the result holds for all $ n \in \{t+1,\ldots,T\} $, then at time $ t $ we have
   \begin{subequations}
       \begin{align}
       &V_t^{T,\alpha}(\mu^G_t,x_t^{\alpha})
       \\
       &\ge \mE^{\sigma_t^{\alpha},\tsigma_t^{-\alpha}} \big[ R(X_t^{\alpha},A_t^{\alpha},\mu^G_t;g^{\alpha})
       + \delta V_{t+1}^{T,\alpha}\big( \phi(\mu^G_t,\theta[\mu^G_t],g^{\alpha}) , X_{t+1}^{\alpha} \big) \mid \mu^G_{1:t},x_{1:t}^{\alpha} \big]
       \\
       &\ge \mE^{\sigma_t^{\alpha},\tsigma_t^{-\alpha}} \big[ R(X_t^{\alpha},A_t^{\alpha},\mu^G_t;g^{\alpha})
       + \delta \mE^{\sigma^{\alpha}_{t+1:T},\tsigma_{t+1:T}^{-\alpha}} \big[ \sum_{n=t+1}^T \delta^{n-(t+1)} R(X_n^{\alpha},A_n^{\alpha},\mu^G_n;g^{\alpha})
       \\ \nonumber
       &+ \delta^{T-t} G(\mu^G_{T+1},X_{T+1}^{\alpha}) \mid \mu^G_{1:t},x_{1:t}^{\alpha},\mu^G_{t+1},X_{t+1}^{\alpha} \big] \mid \mu^G_{1:t},x_{1:t}^{\alpha} \big]
       \\
       &= \mE^{\sigma^{\alpha}_{t:T},\tsigma^{-\alpha}_{t:T}} \big[ \sum_{n=t}^T \delta^{n-t} R(X_n^{\alpha},A_n^{\alpha},\mu^G_n;g^{\alpha})
       + \delta^{T+1-t}G(\mu^G_{T+1},X_{T+1}^{\alpha}) \mid \mu^G_{1:t},x_{1:t}^{\alpha} \big]
       \\
       &= W_t^{\sigma^{\alpha},T}(\mu^G_{1:t},x_{1:t}^{\alpha})
       \end{align}
   \end{subequations}
   Here the first inequality follows from Lemma~\ref{thmfh1}, the second inequality from the induction hypothesis, the third equality follows since the random variables on the right hand side do not depend on $\sigma_t^{\alpha}$, and the final equality by definition~\eqref{eqr2gfh}.
   \endproof
   
   The following result highlights the similarities between the fixed-point equation in infinite-horizon and the backwards recursion in the finite-horizon.
   
   \begin{lemma}\label{lemfhtoih}
       Consider the finite horizon game with $ G \equiv V^{\alpha} $. Then $ V_t^{T,\alpha} = V^{\alpha}$,  $ \forall $ $ i \in \mN $, $ t \in \{1,\ldots,T\} $ satisfies the backwards recursive construction stated above (adapted from \eqref{eq:m_FP} and \eqref{eq:Vdef}).

   \end{lemma}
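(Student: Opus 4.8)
The plan is a backward induction on $t$, from $t=T+1$ down to $t=1$, showing that when the terminal reward is taken to be $G\equiv V$ the finite-horizon construction of Section~\ref{sec:fhbr1} (adapted from \eqref{eq:VT+1}, \eqref{eq:m_FP} and \eqref{eq:Vdef}) admits $V_t^{T}=V$ at every step. The base case is immediate: the analogue of the initialization \eqref{eq:VT+1} with terminal reward $G$ sets $V_{T+1}^{T}\defeq G\equiv V$.

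For the inductive step, suppose $V_{t+1}^{T}=V$. Substituting this into the finite-horizon fixed-point equation at time $t$ (the analogue of \eqref{eq:m_FP} with $V_{t+1}^{T}$ in place of $V_{t+1}^{\alpha}$) gives, for every population state and every $x^{\alpha}\in\cX$,
\begin{align*}
\tilde{\gamma}_t(\cdot|x^{\alpha}) \in \arg\max_{\gamma_t(\cdot|x^{\alpha})} \E^{\gamma_t(\cdot|x^{\alpha})}\Big[ R(X_t^{\alpha},A_t^{\alpha},\mu^G_t;g^{\alpha}) + \delta V\big(\phi(\mu^G_t,\tgamma_t;g^{\alpha}), X_{t+1}^{\alpha}\big) \,\big|\, \mu^G_t, x^{\alpha} \Big],
\end{align*}
which is verbatim the infinite-horizon fixed-point equation \eqref{eq:m_FP_ih}: both the one-step objective and the self-referential appearance of $\tilde{\gamma}_t$ inside $\phi$ on the right-hand side coincide. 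Hence any solution $\tilde{\gamma}$ of \eqref{eq:m_FP_ih} is a legitimate choice $\tilde{\gamma}_t:=\tilde{\gamma}$ for this step, and with this choice the definition \eqref{eq:Vdef} (again with terminal reward $G$ and $V_{t+1}^{T}=V$) yields
\begin{align*}
V_t^{T}(\mu^G_t,x^{\alpha}) = \E^{\tilde{\gamma}(\cdot|x^{\alpha})}\Big[ R(X_t^{\alpha},A_t^{\alpha},\mu^G_t;g^{\alpha}) + \delta V\big(\phi(\mu^G_t,\tgamma;g^{\alpha}), X_{t+1}^{\alpha}\big) \,\big|\, \mu^G_t, x^{\alpha} \Big] = V(\mu^G_t,x^{\alpha}),
\end{align*}
the last equality being precisely the second line of \eqref{eq:m_FP_ih}. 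This closes the induction, so $V_t^{T}=V$ (and, correspondingly, one may take $\theta_t\equiv\theta$ at every $t$) for all $t\in\{1,\ldots,T\}$.

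The only point that needs care is the reading of ``satisfies the backwards recursive construction'': since the fixed-point equation need not have a unique solution, the claim is not that every run of the finite-horizon recursion returns $V$, but that the stationary choice of equilibrium-generating functions inherited from the infinite-horizon problem is consistent with the recursion and produces $V_t^{T}=V$ at each step. I do not expect a genuine obstacle here --- the content is just the observation that the infinite-horizon Bellman-type fixed point is a stationary point of the finite-horizon one-step operator, so the constant trajectory $V$ is self-reproducing; the remaining work is the routine bookkeeping of carrying the modification $G\equiv V$ through \eqref{eq:VT+1}--\eqref{eq:Vdef}.
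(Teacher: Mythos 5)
Your proof is correct and follows essentially the same route as the paper: a backward induction in which the base case uses the terminal condition $G\equiv V$ and the inductive step substitutes $V$ for $V_{t+1}^{T}$ in the finite-horizon equations, observing that they then coincide with the infinite-horizon fixed-point \eqref{eq:m_FP_ih}, so the pair $(V,\tilde{\gamma})$ satisfies the recursion at every step. Your closing remark on reading the lemma as a consistency (rather than uniqueness) statement matches the paper's intended meaning.
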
	
   \proof%[Proof of Lemma~\ref{lemfh}]
       Use backward induction for this. Consider the finite horizon algorithm at time $ t=T $, noting that $ V_{T+1}^{T,\alpha} \equiv G \equiv  V^{\alpha} $,
       \begin{subequations} \label{eqfhT}
           \begin{align} 	
           %		\eq{
           \tilde{\gamma}_T^{T,\alpha}(\cdot \mid x_T^{\alpha}) &\in \arg\max_{\gamma_T(\cdot \mid x_T^{\alpha})} \!\!\! \mE^{\gamma_T(\cdot \mid x_T^{\alpha})} \big[ R(X_t^{\alpha},A_t^{\alpha},\mu^G_t;g^{\alpha})
           %	\\
           + \delta V^{\alpha}\big( \phi(\mu^G_T,\tgamma_t^{T,\alpha}) , X_{T+1}^{\alpha} \big) \mid \mu^G_T,x_T^{\alpha} \big]
           %		}
           %		\eq{
           \\
           V_T^{T,\alpha}(\mu^G_T,x_T^{\alpha}) &= \mE^{\tilde{\gamma}_T^{T}(\cdot \mid x_T^{\alpha})} \big[ R(X_t^{\alpha},A_t^{\alpha},\mu^G_t;g^{\alpha})
           %	\\
           + \delta V\big( \phi(\mu^G_T,\tgamma_t^T) , X_{T+1}^{\alpha} \big) \mid \mu^G_T,x_T^{\alpha} \big].
           %		}
           \end{align}
       \end{subequations}
       Comparing the above set of equations with~\eqref{eq:m_FP_ih}, we can see that the pair $ (V^{\alpha},\tilde{\gamma}^{\alpha}) $ arising out of~\eqref{eq:m_FP_ih} satisfies the above. Now assume that $ V_n^{T,\alpha} \equiv V^{\alpha} $ for all $ n \in \{t+1,\ldots,T\} $. At time $ t $, in the finite horizon construction from~\eqref{eq:m_FP},~\eqref{eq:Vdef}, substituting $ V^{\alpha}$ in place of $ V_{t+1}^{T,\alpha} $ from the induction hypothesis, we get the same set of equations as~\eqref{eqfhT}. Thus $ V_t^{T,\alpha} \equiv V^{\alpha} $ satisfies it.
   \endproof
   
   \section{}
   \label{app:idih}
   \proof
   We prove this by contradiction. Suppose for the equilibrium generating function $\theta$ that generates MPE $\tsigma$, there exists $t\in[T],\alpha\in[0,1], \mu^G_{1:t}\in\cH_t^c,$ such that \eqref{eq:m_FP} is not satisfied for $\theta$
   %\footnote{Note that for $\mu^G_t \neq \mu^G_t $ for any $a_{1:t-1}$, $\phi$ can be arbitrarily defined without affecting the definition of $(\tsigma,\mu^*)$.}
   i.e. for $\tgamma_t = \theta[\mu^G_t] = \tsigma(\cdot|\mu^G_t,\cdot)$,
   \eq{
    \tilde{\gamma}^{\alpha}_t \not\in \arg\max_{\gamma^{\alpha}_t(\cdot|x_t^{\alpha})} \E^{\gamma_t(\cdot|x_t^{\alpha})} \left\{ R(X_t^{\alpha},A_t^{\alpha},\mu^G_t;g^{\alpha}) + \delta V^{\alpha}(\phi(\mu^G_t,\tilde{\gamma}_t), X_{t+1}^{\alpha}) \big\lvert  x_t^{\alpha},\mu^G_t \right\} . \label{eq:FP4}
     }
     Let $t$ be the first instance in the backward recursion when this happens. This implies $\exists\ \hat{\gamma}^{\alpha}_t$ such that
     \eq{
     \E^{\hat{\gamma}^{\alpha}_t(\cdot|x_t)} \left\{ R(X_t^{\alpha},A_t^{\alpha},\mu^G_t;g^{\alpha})+ \delta V^{\alpha}(\phi(\mu^G_t, \tilde{\gamma}^{\alpha}_t;g^{\alpha}), X_{t+1}^{\alpha}) \big\lvert  \mu^G_{1:t},x_{1:t}^{\alpha}\right\}
     \nn\\
     > \E^{\tgamma^{\alpha}_t(\cdot|x_t)} \left\{ R(X_t^{\alpha},A_t^{\alpha},\mu^G_t;g^{\alpha}) +\delta V^{\alpha}(\phi(\mu^G_t, \tilde{\gamma}^{\alpha}_t;g^{\alpha}), X_{t+1}^{\alpha}) \big\lvert  \mu^G_{1:t},x_{1:t}^{\alpha} \right\} \label{eq:E1}
     }
     This implies for $\hat{\sigma}^{\alpha}(\cdot|\mu^G_t,\cdot) = \hat{\gamma}^{\alpha}_t$,
     \eq{
     &\E^{\tsigma^{\alpha}} \left\{ \sum_{n=t}^{\infty} \delta^{n-t} R(X_n^{\alpha},A_n^{\alpha},\mu^G_n;g^{\alpha}) \big\lvert  \mu^G_{1:t-1}, x_{1:t}^{\alpha} \right\}
     \nn\\
     %&= \E^{\tsigma_t} \left\{ R_t(X_t,A_t,\mu^G_t) + \E^{\tsigma_{t:T}}  \left\{ \sum_{n=t+1}^T R_n(X_n,A_n^{\alpha},\mu^G_n) \big\lvert \mu^G_{1:t},\mu^G_{t+1}, x_{1:t},X_{t+1} \right\}  \big\vert \mu^G_{1:t}, x_{1:t} \right\}% \label{eq:E2a}
   \\
     &= \E^{\tsigma_t^{\alpha},\tsigma_t^{-\alpha}} \left\{ R(X_t^{\alpha},A_t^{\alpha},\mu^G_t;g^{\alpha}) + \right.\nn\\
   &\left.\E^{\tsigma_{t+1:T}^{\alpha} \tsigma_{t+1:T}^{-\alpha}}   \left\{ \sum_{n=t+1}^{\infty} \delta^{n-t}R(X_n^{\alpha},A_n^{\alpha},\mu^G_n;g^{\alpha}) \big\lvert \mu^G_{1:t-1},\mu^G_{t+1}, x_{1:t}^{\alpha},X_{t+1}^{\alpha} \right\}  \big\vert \mu^G_{1:t}, x_{1:t}^{\alpha} \right\} \label{eq:E2}
     \\
     &=\E^{\tgamma^{\alpha}_t(\cdot|x_t) \tilde{\gamma}^{-\alpha}_t} \left\{ R(X_t^{\alpha},A_t^{\alpha},\mu^G_t;g^{\alpha}) +\delta V^{\alpha} (\phi(\mu^G_t, \tilde{\gamma}^{\alpha}_t;g^{\alpha}), X_{t+1}^{\alpha}) \big\lvert  \mu^G_t,x_t^{\alpha} \right\} \label{eq:E3}
     \\
     &< \E^{\hat{\sigma}^{\alpha}_t(\cdot|\mu^G_t,x_t^{\alpha}) \tilde{\gamma}^{-\alpha}_t} \left\{ R(X_t^{\alpha},A_t^{\alpha},\mu^G_t;g^{\alpha}) +\delta V^{\alpha}(\phi(\mu^G_t, \tilde{\gamma}^{\alpha}_t;g^{\alpha}), X_{t+1}^{\alpha}) \big\lvert  \mu^G_t,x_t^{\alpha} \right\}\label{eq:E4}
     \\
     &= \E^{\hat{\sigma}^{\alpha}_t \tsigma_t^{-\alpha}} \left\{ R(X_t^{\alpha},A_t^{\alpha},\mu^G_t;g^{\alpha}) +  \right.\nn\\
   &\left.\E^{\tsigma_{t+1:T}^{\alpha} \tsigma_{t+1:T}^{-\alpha}}\left\{ \sum_{n=t+1}^{\infty} \delta^{n-t}R(X_n^{\alpha},A_n^{\alpha},\mu^G_n;g^{\alpha}) \big\lvert \mu^G_{1:t},\mu^G_{t+1}, x_{1:t}^{\alpha},X_{t+1}^{\alpha}\right\} \big\vert \mu^G_{1:t}, x_{1:t}^{\alpha} \right\}\label{eq:E5}
     \\
     &=\E^{\hat{\sigma}^{\alpha}_t,\tsigma_{t+1:T}^{\alpha} \tsigma_{t:T}^{-\alpha}} \left\{ \sum_{n=t}^{\infty} \delta^{n-t} R(X_n^{\alpha},A_n^{\alpha},\mu^G_n;g^{\alpha}) \big\lvert  \mu^G_{1:t}, x_{1:t}^{\alpha} \right\},\label{eq:E6}
     }
     where \eqref{eq:E3} follows from the definitions of $\tgamma_t$ and Appendix~\ref{app:D}, \eqref{eq:E4} follows from \eqref{eq:E1} and the definition of $\hat{\sigma}_t$, \eqref{eq:E5} follows from Appendix~\ref{app:D}. However, this leads to a contradiction since $\tsigma$ is a \gmfe of the game.
   \endproof
   
   % \end{APPENDICES}

\medskip
\small
\bibliographystyle{IEEEtran}
% Generated by IEEEtran.bst, version: 1.13 (2008/09/30)

\bibliography{library,deepanshu,Reference1}

% Generated by IEEEtran.bst, version: 1.13 (2008/09/30)
\begin{thebibliography}{10}
\providecommand{\url}[1]{#1}
\csname url@samestyle\endcsname
\providecommand{\newblock}{\relax}
\providecommand{\bibinfo}[2]{#2}
\providecommand{\BIBentrySTDinterwordspacing}{\spaceskip=0pt\relax}
\providecommand{\BIBentryALTinterwordstretchfactor}{4}
\providecommand{\BIBentryALTinterwordspacing}{\spaceskip=\fontdimen2\font plus
\BIBentryALTinterwordstretchfactor\fontdimen3\font minus
  \fontdimen4\font\relax}
\providecommand{\BIBforeignlanguage}[2]{{%
\expandafter\ifx\csname l@#1\endcsname\relax
\typeout{** WARNING: IEEEtran.bst: No hyphenation pattern has been}%
\typeout{** loaded for the language `#1'. Using the pattern for}%
\typeout{** the default language instead.}%
\else
\language=\csname l@#1\endcsname
\fi
#2}}
\providecommand{\BIBdecl}{\relax}
\BIBdecl

\bibitem{MaTi01}
E.~Maskin and J.~Tirole, ``Markov perfect equilibrium: I. observable actions,''
  \emph{Journal of Economic Theory}, vol. 100, no.~2, pp. 191--219, 2001.

\bibitem{ErPa95}
R.~Ericson and A.~Pakes, ``Markov-perfect industry dynamics: A framework for
  empirical work,'' \emph{The Review of Economic Studies}, vol.~62, no.~1, pp.
  53--82, 1995.

\bibitem{BeVa96}
D.~Bergemann and J.~V{\"a}lim{\"a}ki, ``Learning and strategic pricing,''
  \emph{Econometrica: Journal of the Econometric Society}, pp. 1125--1149,
  1996.

\bibitem{AcRo01}
D.~Acemo\u{g}lu and J.~A. Robinson, ``A theory of political transitions,''
  \emph{American Economic Review}, pp. 938--963, 2001.

\bibitem{HuMaCa06}
M.~Huang, R.~P. Malham{\'e}, and P.~E. Caines, ``Large population stochastic
  dynamic games: closed-loop mckean-vlasov systems and the nash certainty
  equivalence principle,'' \emph{Communications in Information \& Systems},
  vol.~6, no.~3, pp. 221--252, 2006.

\bibitem{LaLi07}
J.-M. Lasry and P.-L. Lions, ``Mean field games,'' \emph{Japanese Journal of
  Mathematics}, vol.~2, no.~1, pp. 229--260, 2007.

\bibitem{Caetal15}
P.~Cardaliaguet, F.~Delarue, J.-M. Lasry, and P.-L. Lions, ``The master
  equation and the convergence problem in mean field games,'' \emph{arXiv
  preprint arXiv:1509.02505}, 2015.

\bibitem{La16}
D.~Lacker, ``A general characterization of the mean field limit for stochastic
  differential games,'' \emph{Probability Theory and Related Fields}, vol. 165,
  no. 3-4, pp. 581--648, 2016.

\bibitem{Fi17}
M.~Fischer \emph{et~al.}, ``On the connection between symmetric $ n $-player
  games and mean field games,'' \emph{The Annals of Applied Probability},
  vol.~27, no.~2, pp. 757--810, 2017.

\bibitem{La18}
D.~Lacker, ``On the convergence of closed-loop nash equilibria to the mean
  field game limit,'' \emph{arXiv preprint arXiv:1808.02745}, 2018.

\bibitem{DeLaRa19}
\BIBentryALTinterwordspacing
F.~Delarue, D.~Lacker, and K.~Ramanan, ``From the master equation to mean field
  game limit theory: a central limit theorem,'' \emph{Electron. J. Probab.},
  vol.~24, p. 54 pp., 2019. [Online]. Available:
  \url{https://doi.org/10.1214/19-EJP298}
\BIBentrySTDinterwordspacing

\bibitem{PaOz19}
F.~Parise and A.~Ozdaglar, ``Graphon games,'' in \emph{Proceedings of the 2019
  ACM Conference on Economics and Computation}, 2019, pp. 457--458.

\bibitem{Lo12}
L.~Lov{\'a}sz, \emph{Large networks and graph limits}.\hskip 1em plus 0.5em
  minus 0.4em\relax American Mathematical Soc., 2012, vol.~60.

\bibitem{CaHu18}
P.~E. Caines and M.~Huang, ``Graphon mean field games and the gmfg equations,''
  in \emph{2018 IEEE Conference on Decision and Control (CDC)}.\hskip 1em plus
  0.5em minus 0.4em\relax IEEE, 2018, pp. 4129--4134.

\bibitem{NaMaTe13}
A.~Nayyar, A.~Mahajan, and D.~Teneketzis, ``Decentralized stochastic control
  with partial history sharing: A common information approach,''
  \emph{Automatic Control, IEEE Transactions on}, vol.~58, no.~7, pp.
  1644--1658, 2013.

\end{thebibliography}


\begin{thebibliography}{10}
\providecommand{\url}[1]{#1}
\csname url@samestyle\endcsname
\providecommand{\newblock}{\relax}
\providecommand{\bibinfo}[2]{#2}
\providecommand{\BIBentrySTDinterwordspacing}{\spaceskip=0pt\relax}
\providecommand{\BIBentryALTinterwordstretchfactor}{4}
\providecommand{\BIBentryALTinterwordspacing}{\spaceskip=\fontdimen2\font plus
\BIBentryALTinterwordstretchfactor\fontdimen3\font minus
  \fontdimen4\font\relax}
\providecommand{\BIBforeignlanguage}[2]{{%
\expandafter\ifx\csname l@#1\endcsname\relax
\typeout{** WARNING: IEEEtran.bst: No hyphenation pattern has been}%
\typeout{** loaded for the language `#1'. Using the pattern for}%
\typeout{** the default language instead.}%
\else
\language=\csname l@#1\endcsname
\fi
#2}}
\providecommand{\BIBdecl}{\relax}
\BIBdecl

\bibitem{VaMiVi21}
\BIBentryALTinterwordspacing
D.~Vasal, R.~K. Mishra, and S.~Vishwanath, ``{Sequential decomposition of
  graphon mean field games},'' \emph{Proceedings of the American Control
  Conference}, vol. 2021-May, pp. 730--736, jan 2020. [Online]. Available:
  \url{https://arxiv.org/abs/2001.05633v1}
\BIBentrySTDinterwordspacing

\bibitem{Wi68}
H.~Witsenhausen, ``A counterexample in stochastic optimum control,'' \emph{SIAM
  Journal on Control}, vol.~6, no.~1, pp. 131--147, 1968.

\bibitem{NaMaTe13}
A.~Nayyar, A.~Mahajan, and D.~Teneketzis, ``Decentralized stochastic control
  with partial history sharing: A common information approach,''
  \emph{Automatic Control, IEEE Transactions on}, vol.~58, no.~7, pp.
  1644--1658, 2013.

\bibitem{AbMa14}
\BIBentryALTinterwordspacing
J.~Arabneydi and A.~Mahajan, ``{Team Optimal Control of Coupled Subsystems with
  Mean-Field Sharing},'' dec 2020. [Online]. Available:
  \url{https://arxiv.org/abs/2012.01418v1}
\BIBentrySTDinterwordspacing

\bibitem{MaTi01}
E.~Maskin and J.~Tirole, ``Markov perfect equilibrium: I. observable actions,''
  \emph{Journal of Economic Theory}, vol. 100, no.~2, pp. 191--219, 2001.

\bibitem{ErPa95}
R.~Ericson and A.~Pakes, ``Markov-perfect industry dynamics: A framework for
  empirical work,'' \emph{The Review of Economic Studies}, vol.~62, no.~1, pp.
  53--82, 1995.

\bibitem{BeVa96}
D.~Bergemann and J.~V{\"a}lim{\"a}ki, ``Learning and strategic pricing,''
  \emph{Econometrica: Journal of the Econometric Society}, pp. 1125--1149,
  1996.

\bibitem{AcRo01}
D.~Acem\u{o}glu and J.~A. Robinson, ``{A theory of political
  transitions},'' \emph{American Economic Review}, pp. 938--963, 2001.

\bibitem{VaSiAn16arxiv}
D.~Vasal, A.~Sinha, and A.~Anastasopoulos, ``A systematic process for
  evaluating structured perfect bayesian equilibria in dynamic games with
  asymmetric information,'' \emph{IEEE Transactions on Automatic Control},
  2018.

\bibitem{VaAn16}
D.~Vasal and A.~Anastasopoulos, ``A systematic process for evaluating
  structured perfect {B}ayesian equilibria in dynamic games with asymmetric
  information,'' in \emph{American {C}ontrol {C}onference}, Boston, US, 2016,
  available on arXiv.

\bibitem{Ta17}
H.~T. Jahormi, ``On design and analysis of cyber-physical systems with
  strategic agents,'' Ph.D. dissertation, University of Michigan, Ann Arbor,
  2017.

\bibitem{HeAn20}
\BIBentryALTinterwordspacing
N.~Heydaribeni and A.~Anastasopoulos, ``{Structured Equilibria for Dynamic
  Games with Asymmetric Information and Dependent Types},'' sep 2020. [Online].
  Available: \url{https://arxiv.org/abs/2009.04253v1}
\BIBentrySTDinterwordspacing

\bibitem{OuTaTe17}
Y.~Ouyang, H.~Tavafoghi, and D.~Teneketzis, ``Dynamic games with asymmetric
  information: Common information based perfect bayesian equilibria and
  sequential decomposition,'' \emph{IEEE Transactions on Automatic Control},
  vol.~62, no.~1, pp. 222--237, 2017.

\bibitem{HuMaCa06}
M.~Huang, R.~P. Malham{\'e}, and P.~E. Caines, ``Large population stochastic
  dynamic games: closed-loop mckean-vlasov systems and the nash certainty
  equivalence principle,'' \emph{Communications in Information \& Systems},
  vol.~6, no.~3, pp. 221--252, 2006.

\bibitem{LaLi07}
J.-M. Lasry and P.-L. Lions, ``Mean field games,'' \emph{Japanese Journal of
  Mathematics}, vol.~2, no.~1, pp. 229--260, 2007.

\bibitem{Caetal15}
P.~Cardaliaguet, F.~Delarue, J.-M. Lasry, and P.-L. Lions, ``The master
  equation and the convergence problem in mean field games,'' \emph{arXiv
  preprint arXiv:1509.02505}, 2015.

\bibitem{La16}
D.~Lacker, ``A general characterization of the mean field limit for stochastic
  differential games,'' \emph{Probability Theory and Related Fields}, vol. 165,
  no. 3-4, pp. 581--648, 2016.

\bibitem{Fi17}
M.~Fischer \emph{et~al.}, ``On the connection between symmetric $ n $-player
  games and mean field games,'' \emph{The Annals of Applied Probability},
  vol.~27, no.~2, pp. 757--810, 2017.

\bibitem{La18}
D.~Lacker, ``On the convergence of closed-loop nash equilibria to the mean
  field game limit,'' \emph{arXiv preprint arXiv:1808.02745}, 2018.

\bibitem{DeLaRa19}
\BIBentryALTinterwordspacing
F.~Delarue, D.~Lacker, and K.~Ramanan, ``From the master equation to mean field
  game limit theory: a central limit theorem,'' \emph{Electron. J. Probab.},
  vol.~24, p. 54 pp., 2019. [Online]. Available:
  \url{https://doi.org/10.1214/19-EJP298}
\BIBentrySTDinterwordspacing

\bibitem{PaOz19}
F.~Parise and A.~Ozdaglar, ``Graphon games,'' in \emph{Proceedings of the 2019
  ACM Conference on Economics and Computation}, 2019, pp. 457--458.

\bibitem{Lo12}
L.~Lov{\'a}sz, \emph{Large networks and graph limits}.\hskip 1em plus 0.5em
  minus 0.4em\relax American Mathematical Soc., 2012, vol.~60.

\bibitem{CaHu18}
P.~E. Caines and M.~Huang, ``Graphon mean field games and the gmfg equations,''
  in \emph{2018 IEEE Conference on Decision and Control (CDC)}.\hskip 1em plus
  0.5em minus 0.4em\relax IEEE, 2018, pp. 4129--4134.

\bibitem{CaDeLaLi15}
\BIBentryALTinterwordspacing
P.~Cardaliaguet, F.~Delarue, J.-M. Lasry, and P.-L. Lions, ``{The master
  equation and the convergence problem in mean field games},'' \emph{Annals of
  Mathematics Studies}, vol. 2019-Janua, no. 201, pp. 1--222, sep 2015.
  [Online]. Available: \url{https://arxiv.org/abs/1509.02505v1}
\BIBentrySTDinterwordspacing

\bibitem{VaMiVi20}
R.~Mishra, D.~Vasal, and S.~Vishwanath, ``{Model-free Reinforcement Learning
  for Stochastic Stackelberg Security Games},'' 2020.

\bibitem{TcCaHu20}
R.~F. Tchuendom, P.~E. Caines, and M.~Huang, ``{On the Master Equation for
  Linear Quadratic Graphon Mean Field Games},'' \emph{Proceedings of the IEEE
  Conference on Decision and Control}, vol. 2020-Decem, pp. 1026--1031, dec
  2020.

\bibitem{KuVa86}
P.~Kumar and P.~Varaiya, ``Stochastic systems,'' 1986.

\end{thebibliography}
\end{document}